\newtheorem{theorem}{Theorem}[section]
\newtheorem{definition}[theorem]{Definition}
\newtheorem{proposition}[theorem]{Proposition}
\newtheorem{corollary}[theorem]{Corollary}
\theoremstyle{remark}
\newtheorem{remark}[theorem]{Remark}
\newtheorem{example}[theorem]{Example}
\newcommand{\NN}{\mathbb{N}}
\newcommand{\CC}{\mathbb{C}}
\newcommand{\cC}{\mathcal{C}}
\newcommand{\cD}{\mathcal{D}}
\newcommand{\cB}{\mathcal{B}}
\newcommand{\cR}{\mathcal{R}}
\newcommand{\R}{\mathbb{R}}
\newcommand{\C}{\mathbb{C}}
\renewcommand{\Im}{\operatorname{Im}}
\newcommand{\D}{\mathbb{D}}
\newcommand{\B}{\mathcal{B}}
\renewcommand{\O}{\mathcal{O}}
\newcommand{\F}{\mathcal{F}}
\renewcommand{\epsilon}{\varepsilon}
\numberwithin{equation}{section}
\title[Free Deterministic Equivalents]{Free Deterministic Equivalents, Rectangular Random Matrix Models, and Operator-Valued Free Probability Theory} 
\author[R. Speicher and C. Vargas (Appendix by T. Mai)]{Roland Speicher and Carlos Vargas\\
With an Appendix by Tobias Mai}
\address{Saarland University, Fachbereich Mathematik, Postfach 151150,
66041 Saarbr\"ucken, Germany}
\begin{document}

\maketitle

\begin{abstract}
Motivated by the asymptotic collective behavior of random and deterministic matrices, we propose an approximation (called ``free deterministic equivalent'') to quite general random matrix models, by replacing the matrices with operators satisfying certain freeness relations. We comment on the relation between our free deterministic equivalent and deterministic equivalents considered in the engineering literature.
We do not only consider the case of square matrices, but also show how rectangular matrices can be treated. Furthermore, we emphasize how operator-valued free probability techniques can be used to solve our free deterministic equivalents.

As an illustration of our methods we show how the free deterministic equivalent of a random matrix model from \cite{CouDeHo}
can be treated and we thus recover in a conceptual way the results from \cite{CouDeHo}.

On a technical level,
we generalize a result from scalar valued free probability, by showing that randomly rotated deterministic matrices of different sizes are asymptotically free from deterministic rectangular matrices, with amalgamation over a certain algebra of projections.

In the Appendix, we show how estimates for differences between Cauchy transforms can be extended from a neighborhood of infinity to a region close to the real axis. This is of some relevance if one wants to
compare the original random matrix problem with its free deterministic equivalent.
\end{abstract}

\section{Introduction}

Voiculescu's results \cite{V-Inventiones, VDN, HP, NiSp} on the asymptotic freeness of random matrices have provided a way to understand the collective asymptotic behavior of a variety of random matrix ensembles. Many known results from the random matrix literature have found straight-forward solutions via free probability. Additionally, this perspective has lead to new results (e.g. the asymptotic freeness of randomly rotated deterministic ensembles). 
The engineering community, in particular, has added concepts and results from free probability theory to their toolbox for dealing with random matrix ensembles in the context of wireless communications, see, e.g., \cite{Tulino-Verdu}. However, still there is the widespread perception that free probability cannot deal with situations where rectangular matrices are involved or where one does not have an asymptotic eigenvalue distribution of some of the involved ensembles. In the latter context, so-called ``deterministic equivalents'' (apparently going back to Girko \cite{Gir}, see also \cite{HLN}) have recently been studied quite extensively in the engineering literature.

In this paper we will show 
\begin{itemize}
 \item 
how those deterministic equivalents can be understood and formalized in a conceptual way by using free probability theory
\item
and that this can actually also be done in the presence of rectangular random matrices  
\end{itemize}

Operator-valued free probability \cite{V-operatorvalued, Sp-Memoir} will play a crucial role in these investigations. Firstly, it is actually needed to deal in a nice way with rectangular random matrices - following the ideas of Benaych-Georges \cite{Be2} we will embed the rectangular matrices in a large matrix space, which has the structure of a projection-valued probability space (which is one of the simplest cases of operator-valued spaces). Secondly, for more advanced random matrix models one usually has to rely on operator-valued freeness results to derive the equations for the deterministic equivalents. The fundamental observation that 
the operator valued version of free probability allows to understand a broader class of matrix ensembles, such as those consisting of band matrices or block matrices, goes back to Shlyakhtenko \cite{S-band}. See also \cite{ROBS} for a first use of operator-valued freeness in the context of wireless communication problems.

In order to illuminate, and also to show the power of our general approach we will re-derive the equations for the deterministic equivalent of the Cauchy transform of a model considered in \cite{CouDeHo} (in the asymptotic regime this model was also considered before in \cite{PCH}). The model is the following:
\begin{equation}
\Phi_{N}=\sum_{i=1}^{k}H_{i}^{\left(N\right)}U_{i}^{\left(N\right)}T_{i}^{\left(N\right)}U_{i}^{\left(N\right)*}H_{i}^{\left(N\right)*},\label{mimomodel}
\end{equation}
where $k$ is a fixed integer and
\begin{itemize}
 \item[(A1)] 
For each $1\leq i\leq k$, $(N_{i}^{\left(N\right)})_{N\in\mathbb{N}}$ is a sequence of natural numbers such that $N_{i}^{\left(N\right)}/N\to c_{i}\in\left(0,\infty\right)$.
\item[(A2)] 
For each $1\leq i\leq k$, $H_{i}^{\left(N\right)}$ is an $N\times N_{i}^{\left(N\right)}$ deterministic matrix and $T_{i}^{\left(N\right)}$ is an $N_{i}^{\left(N\right)}\times N_{i}^{\left(N\right)}$ deterministic matrix.
\item[(A3)] 
The matrices $U_{i}^{\left(N\right)}$ are independent Haar-distributed unitaries from $\mathcal{U}(N_{i}^{\left(N\right)})$.
\end{itemize}

The paper is organized as follows: 

Section 2 is devoted to the basic theory on operator-valued free probability. We state some results from \cite{Be2}, generalizing the asymptotic freeness results by Voiculescu to the case where we allow rectangular matrices of different sizes. These results are stated in the framework of rectangular probability spaces. Our contribution to the theory is a generalization of the asymptotic freeness of randomly rotated deterministic matrices.

In Section 3 we define the notion of a ``free deterministic equivalent'' for a random matrix model, based on our understanding of the limiting collective behavior of random and deterministic rectangular matrices provided in the previous section. We will also point out the relation between our free deterministic equivalent and the deterministic equivalents from the engineering literature: whereas the latter is an approximation on the level of equations for the Cauchy transforms, our approximation is directly on the level of operators;
but in the end they coincide, i.e. the Cauchy transform of our free deterministic equivalent satisfies exactly the deterministic equivalent equations. 

In Section 4 we present the free deterministic equivalent of the random matrix model \eqref{mimomodel}
and derive, by using operator-valued free probability results, the equations for its Cauchy transform. This will recover the results from \cite{CouDeHo}.

In the last section we give the proof of our result on randomly rotated rectangular matrices from Section 2.

In the Appendix (written by T. Mai) we show how estimates for differences between Cauchy transforms can be extended from a neighborhood of infinity to a region close to the real axis. This is of some relevance if one wants to
compare the original problem with its free deterministic equivalent.

\section{Operator-Valued Free Probability}

\subsection{Preliminaries}

We begin by recalling the basic concepts on operator-valued free probability, for more details see \cite{V-operatorvalued, Sp-Memoir}.

Let $\mathcal{B}$ be a unital algebra. A $\mathcal{B}$\textit{-probability space} is a pair $\left(\mathcal{A},\mathbf{F}\right)$ consisting of an algebra $\mathcal{A}$ containing $\mathcal{B}$ and a linear map  $\mathbf{F}:\mathcal{A}\to\mathcal{B}$ satisfying
\begin{equation*}
\mathbf{F}\left(bab'\right) = b\mathbf{F}(a)b', \qquad \forall b,b'\in\mathcal{B},a\in\mathcal{A}
$$
and
$$
\mathbf{F}\left(1\right) = 1. 
\end{equation*}
Such a map $\mathbf{F}$ is called a \textit{conditional expectation}. From the properties above we observe that  $\mathbf{F}\left(b\right)=b$ for all $b\in\mathcal{B}$.

A collection of subalgebras $\mathcal{B}\subset A_{1},\dots,A_{n}\subset\mathcal{A}$ is \textit{free with amalgamation over} $\mathcal{B}$ (or, $\mathcal{B}$\textit{-free}, for short) if we have that for every $k\in\mathbb{N}$
\begin{equation*}
\mathbf{F}\left(a_{i\left(1\right)}\dots a_{i\left(k\right)}\right)=0 
\end{equation*}
whenever we have $\mathbf{F}\left(a_{i\left(j\right)}\right)=0$ and 
$a_{i(j)}\in A_{\varepsilon\left(j\right)}$ for some $\varepsilon(j)\in\{1,\dots,n\}$, $1\leq j\leq k$ and that $\varepsilon\left(1\right)\neq\varepsilon\left(2\right)\neq\dots\neq
\varepsilon\left(k\right)$. (Note that $\varepsilon\left(1\right)=\varepsilon\left(3\right)$ is allowed.)

Elements $a_{1},\dots,a_{n}$ are $\mathcal{B}$\textit{-free} if the algebras $\left\langle a_{1},\mathcal{B}\right\rangle ,\dots,\left\langle a_{n},\mathcal{B}\right\rangle $ are $\mathcal{B}$-free. ($\left\langle a_{1},\mathcal{B}\right\rangle$ is here the subalgebra of $\mathcal{A}$ which is generated by $a_1$ and by $\mathcal{B}$.)

For $n\in\mathbb{N}$, a $\mathbb{C}$-multilinear map $f:\mathcal{A}^{n}\to\mathcal{B}$ is called $\mathcal{B}$\textit{-balanced} if it satisfies the $\mathcal{B}$-bilinearity conditions, that for all $b,b'\in\mathcal{B}$, $a_{1},\dots,a_{n}\in\mathcal{A}$, and for all $r=1,\dots,n-1$
\begin{eqnarray*}
f\left(ba_{1},\dots,a_{n}b'\right) &=& bf\left(a_{1},\dots,a_{n}\right)b'\\
f\left(a_{1},\dots,a_{r}b,a_{r+1},\dots,a_{n}\right) &=& f\left(a_{1},\dots,a_{r},ba_{r+1}\dots,a_{n}\right)
\end{eqnarray*}

For a finite totally ordered set $S$, a \textit{partition} $\pi=\{V_1,\dots,V_k\}$ is a decomposition of $S$ into pairwise disjoint non-empty subsets, called \textit{blocks}. A partition is \textit{non-crossing} if whenever we have $a<b<c<d\in S$ such that $a,c \in V_i$ and $b,d \in V_j$, then $V_i=V_j$. We denote by $NC(n)$ the set of non-crossing partitions of $[n]:=\{1,\dots ,n\}$ and $NC:=\bigcup_{n \geq 0}NC(n)$. The partial order of reverse refinement ($\sigma \leq \pi$ iff every block of $\sigma$ is completely contained in a block of $\pi$) turns $NC(n)$ into a lattice. We denote the smallest and largest partition of $NC(n)$ by $\mathbf{0}_n$ and $\mathbf{1}_n$, respectively.

A collection of $\mathcal{B}$-balanced maps $\left(f_{\pi}\right)_{\pi\in NC}$ is said to be \textit{multiplicative} (w. r. t. the lattice of non-crossing partitions) if, for every $\pi\in NC$, $f_{\pi}$ is computed using the block structure of $\pi$ in the following way:

\begin{itemize}
 \item If $\pi=\mathbf{1}_{n}\in NC\left(n\right)$, we just write $f_{n}:=f_{\pi}$.

 \item If $\mathbf{1}_{n}\neq\pi=\left\{ V_{1},\dots,V_{k}\right\} \in NC\left(n\right),$ then by a known characterization of $NC$, there exists a block $V_{r}=\left\{ s+1,\dots,s+l\right\} $ containing consecutive elements. For any such a block we must have
\begin{equation*}
f_{\pi}\left(a_{1},\dots,a_{n}\right)=f_{\pi\backslash V_{r}}\left(a_{1},\dots,a_{s}f_{l}\left(a_{s+1},\dots,a_{s+l}\right),a_{s+l+1},\dots,a_{n}\right), 
\end{equation*}
where $\pi\backslash V_{r}\in NC\left(n-l\right)$ is the partition obtained from removing the block $V_{r}$.

\end{itemize}

We observe that a multiplicative family $\left(f_{\pi}\right)_{\pi\in NC}$ is entirely determined by $\left(f_{n}\right)_{n\in\mathbb{N}}$. On the other hand, every collection $\left(f_{n}\right)_{n\in\mathbb{N}}$ of $\mathcal{B}$-balanced maps can be extended uniquely to a multiplicative family $\left(f_{\pi}\right)_{\pi\in NC}$.

The \textit{operator-valued cumulants} $\left(\kappa_{\pi}\right)_{\pi\in NC}$ are indirectly and inductively defined as the unique multiplicative family of $\mathcal{B}$-balanced maps satisfying the (operator-valued) moment-cumulant formulas
\begin{equation*}
\mathbf{F}\left(a_{1}\dots a_{n}\right)=\sum_{\pi\in NC\left(n\right)}\kappa_{\pi}\left(a_{1},\dots,a_{n}\right) 
\end{equation*}

The conditional expectation yields also a multiplicative family by simply defining $\mathbf{F}_{n}:\mathcal{A}^{n}\to\mathcal{B}$ by
\begin{equation*}
\mathbf{F}_{n}\left(a_{1},\dots,a_{n}\right)=\mathbf{F}\left(a_{1}\dots a_{n}\right), 
\end{equation*}
and extending multiplicatively to $NC$. Actually, the relation between cumulants and moments is much richer. The cumulants can also be obtained from the moments via a M\"{o}bius inversion in the lattice of non-crossing partitions, and one has, for example
\begin{equation*}
\kappa_{\pi}\left(a_{1},\dots,a_{n}\right)=\sum_{\sigma\leq\pi}\mathbf{F}_{\sigma}\left(a_{1},\dots,a_{n}\right)\mu_n[\sigma,\pi], 
\end{equation*}
where $\mu_n:NC\left(n\right)^{2}\to\mathbb{C}$ is the \textit{M\"{o}bius function} in $NC\left(n\right)$.

By the \textit{cumulants of a tuple} $a_{1},\dots,a_{k}\in\mathcal{A}$, we mean the collection of all cumulant maps
\begin{equation*}
\begin{array}{cccc}
\kappa_{i_{1},\dots,i_{n}}^{a_{1},\dots,a_{k}}: & \mathcal{B}^{n-1} & \to & \mathcal{B},\\
 & \left(b_{1},\dots,b_{n-1}\right) & \mapsto & \kappa_{n}^{\mathcal{B}}\left(a_{i_{1}}b_{1},a_{i_2}b_2,\dots,a_{i_n}\right)\end{array} 
\end{equation*}
for $n\in\mathbb{N}$, $1\leq i_{1},\dots,i_{n}\leq k$. We sometimes write $\kappa_{i_{1},\dots,i_{n}}^{\mathcal{B};a_{1},\dots,a_{k}}$ to emphasize the underlying subalgebra.

A cumulant map $\kappa_{i_{1},\dots,i_{n}}^{a_{1},\dots,a_{k}}$ is \textit{mixed} if there exists $r<n$ such that $i_{r}\ne i_{r+1}$.

The main feature of the operator-valued cumulants is that they characterize freeness with amalgamation \cite{Sp-Memoir}: The random variables $a_{1},\dots,a_{n}$ are $\mathcal{B}$-free iff all their mixed cumulants vanish. 

The \textit{operator-valued Cauchy transform} is given by
\begin{equation*}
G_{a}^{\mathcal{B}}\left(b\right)=\mathbf{F}\left(\frac{1}{b-a}\right)=\sum_{n\geq0}\mathbf{F}\left(b^{-1}\left(ab^{-1}\right)^{n}\right) 
\end{equation*}

The \textit{operator-valued $\mathcal{R}$-transform} is given by
\begin{equation*}
\mathcal{R}_{a}^{\mathcal{B}}\left(b\right)=\sum_{n\geq1}\kappa_{n}^{\mathcal{B}}\left(a,ba,\dots,ba\right). 
\end{equation*}
The vanishing of mixed moments for free variables implies the additivity of the cumulants, and thus also the additivity of the $\cR$-transforms \cite{V-operatorvalued}: If $a_1$ and $a_2$ are $\cB$-free then we have for $b\in \cB$ that 
$\cR_{a_1+a_2}(b)=\cR_{a_1}(b)+\cR_{a_2}(b)$.

As in the scalar case, these transforms satisfy the functional equation
\begin{equation*}
G_{a}^{\mathcal{B}}\left(b\right)=\left(\mathcal{R}_{a}^{\mathcal{B}}\left(G_{a}^{\mathcal{B}}\left(b\right)\right)-b\right)^{-1} 
\end{equation*}

The above transforms and the functional equation are on one hand just identities for formal power series, but one can also consider them as relations for analytic functions, on appropriate domains. For more on the analytic properties in the operator-valued context, see \cite{V-operatorvalued,BePoVi}.

\subsection{Working on Different Levels}

One of the most fascinating aspects about the operator-valued version of free probability is that one can consider several conditional expectations on a single space, and compare the corresponding distributions.

In order to guarantee the existence of nice conditional expectations onto subalgebras one needs some analytic structure. Particular nice is the case of a tracial $W^*$-probability space $\left(\mathcal{A},\tau\right)$, which means that 
$\mathcal{A}$ is a $W^*$-algebra, or von Neumann algebra, (i.e., a unital subalgebra of bounded operators on a Hilbert space, closed in the weak operator topology), and that $\tau$ is a normal state which satisfies the trace condition $\tau(a_1a_2)=\tau(a_2a_1)$ for all $a_1,a_2\in \mathcal{A}$. We will only
encounter situations of this type. In particular, matrices (equipped with the
usual trace as state) and their limits are of this type.

If we start from a tracial $W^{*}$-probability space $\left(\mathcal{A},\tau\right)$, then every $W^{*}$-subalgebra $\mathbb{C}\subset\mathcal{B}\subset\mathcal{A}$ induces a unique conditional expectation $\mathbf{F}:\mathcal{A}\to\mathcal{B}$ compatible with $\tau$ in the sense that $\tau=\tau\circ\mathbf{F}$.

This means in particular that the scalar-valued Cauchy transform can be obtained by the projection of the operator-valued one, namely

\begin{equation*}
\tau\left(G_{a}^{\mathcal{B}}\left(z\mathbf{1}_{\mathcal{B}}\right)\right)=G_{a}^{\mathbb{C}}\left(z\right). 
\end{equation*}

When dealing with elements $a_{1},\dots,a_{n}$ in $\mathcal{A}$ which are not free, very often we can ``enlarge'' the algebra of scalars $\mathbb{C}$ to a subalgebra  $\mathbb{C}\subset\mathcal{B}\subset\mathcal{A}$, in such a way that $a_{1},\dots,a_{n}$ become $\mathcal{B}$-free. Then, a variety of results can be used to treat and simplify the $\mathcal{B}$-distribution of products and sums of these free elements. The scalar-valued distribution can then be obtained in the end by projecting with $\tau$.

We recall two results from \cite{NiShSp}. The first proposition gives conditions for (operator-valued) cumulants to be restrictions of cumulants with respect to a larger algebra.

\begin{proposition}

\label{thm:Restrictions}

Let $1\in\mathcal{D}\subset\mathcal{B}\subset\mathcal{A}$ be algebras such that $\left(\mathcal{A},\mathbf{E}\right)$ and $\left(\mathcal{B},\mathbf{F}\right)$ are respectively $\mathcal{B}$-valued and $\mathcal{D}$-valued probability spaces (and therefore $\left(\mathcal{A},\mathbf{F}\circ\mathbf{E}\right)$ is a $\mathcal{D}$-valued probability space) and let $a_{1},\dots,a_{k}\in\mathcal{A}$.

Assume that the $\mathcal{B}$-cumulants of $a_{1},\dots,a_{k}\in\mathcal{A}$ satisfy
\begin{equation*}
\kappa_{i_{1},\dots,i_{n}}^{\mathcal{B};a_{1},\dots,a_{k}}\left(d_{1},\dots,d_{n-1}\right)\in\mathcal{D}, 
\end{equation*}
for all $n\in\mathbb{N}$, $1\leq i_{1},\dots,i_{n}\leq k$, $d_{1},\dots,d_{n-1}\in\mathcal{D}$.

Then the $\mathcal{D}$-cumulants of $a_{1},\dots,a_{k}$ are just the restrictions of the $\mathcal{B}$-cumulants of $a_{1},\dots,a_{k}$, namely
\begin{equation*}
\kappa_{i_{1},\dots,i_{n}}^{\mathcal{B};a_{1},\dots,a_{k}}\left(d_{1},\dots,d_{n-1}\right)=\kappa_{i_{1},\dots,i_{n}}^{\mathcal{D};a_{1},\dots,a_{k}}\left(d_{1},\dots,d_{n-1}\right), 
\end{equation*}
for all $n\in\mathbb{N}$, $1\leq i_{1},\dots,i_{n}\leq k$, $d_{1},\dots,d_{n-1}\in\mathcal{D}$.

\end{proposition}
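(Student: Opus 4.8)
The plan is to identify the $\mathcal{D}$-cumulants of the tuple $a_1,\dots,a_k$ with the restrictions of the $\mathcal{B}$-cumulants by showing that the latter, regarded as a family of $\mathcal{D}$-balanced maps, already satisfy the moment--cumulant relations for the $\mathcal{D}$-valued probability space $(\mathcal{A},\mathbf{F}\circ\mathbf{E})$; uniqueness of operator-valued cumulants then finishes the argument. Concretely I would argue by induction on $n$ that
\[
\kappa^{\mathcal{B};a_1,\dots,a_k}_{i_1,\dots,i_n}(d_1,\dots,d_{n-1})=\kappa^{\mathcal{D};a_1,\dots,a_k}_{i_1,\dots,i_n}(d_1,\dots,d_{n-1})
\]
for all $d_1,\dots,d_{n-1}\in\mathcal{D}$, comparing at each level the $\mathcal{B}$- and $\mathcal{D}$-valued expansions of the mixed moment $\mathbf{F}(\mathbf{E}(a_{i_1}d_1a_{i_2}d_2\cdots a_{i_n}))$. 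For $n=1$ there are no arguments and the claim reads $\mathbf{E}(a_i)=\mathbf{F}(\mathbf{E}(a_i))$, which holds because $\mathbf{E}(a_i)\in\mathcal{D}$ by the hypothesis (the $n=1$ instance) and $\mathbf{F}$ restricts to the identity on $\mathcal{D}$.

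The inductive step rests on two facts about the $\mathcal{B}$-multiplicative family $(\kappa^{\mathcal{B}}_\pi)_{\pi\in NC}$ evaluated on tuples whose entries are of the form $a_{i_j}d_j$ with $d_j\in\mathcal{D}$ (and, as produced by the recursion below, possibly bare elements of $\mathcal{D}$). First, since $\mathcal{D}\subseteq\mathcal{B}$, every $\mathcal{B}$-balanced map is $\mathcal{D}$-balanced, so the restrictions are legitimate candidates for the $\mathcal{D}$-cumulants of the tuple. Second, for such a tuple one has $\kappa^{\mathcal{B}}_\pi(a_{i_1}d_1,\dots,a_{i_n})\in\mathcal{D}$ for every $\pi\in NC(n)$, and this element coincides with the value of the $\mathcal{D}$-multiplicative extension $\kappa^{\mathcal{D}}_\pi$ of the restricted family whenever $\pi\ne\mathbf{1}_n$. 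One sees this by peeling off an interval block $V_r=\{s+1,\dots,s+l\}$ as in the definition of a multiplicative family: the inner factor $\kappa^{\mathcal{B}}_l(a_{i_{s+1}}d_{s+1},\dots,a_{i_{s+l}}d_{s+l})$ has $l\le n-1$ entries, so by the inductive hypothesis it equals the corresponding $\kappa^{\mathcal{D}}_l$ and in particular lies in $\mathcal{D}$ (pulling out the trailing $d_{s+l}\in\mathcal{D}$ by balancedness); inserting this $\mathcal{D}$-element into the $s$-th slot yields an argument of the form $a_{i_s}\cdot(\text{element of }\mathcal{D})$ and leaves $\kappa^{\mathcal{B}}_{\pi\setminus V_r}$ with $n-l\le n-1$ arguments, to which the inductive hypothesis again applies; the degenerate positions of the peeled block (at the start or end of the tuple, producing a bare $\mathcal{D}$-entry) are absorbed by balancedness together with the vanishing of operator-valued cumulants of order $\ge 2$ on a slot occupied by an element of $\mathcal{B}$. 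Crucially, the hypothesis of the proposition is exactly what supplies the remaining case $\pi=\mathbf{1}_n$ of the first assertion, which the inductive hypothesis cannot reach.

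To conclude, apply $\mathbf{F}$ to the $\mathcal{B}$-valued moment--cumulant formula
\[
\mathbf{E}(a_{i_1}d_1a_{i_2}d_2\cdots a_{i_n})=\sum_{\pi\in NC(n)}\kappa^{\mathcal{B}}_\pi(a_{i_1}d_1,\dots,a_{i_n}).
\]
By the previous paragraph every summand on the right lies in $\mathcal{D}$ and is fixed by $\mathbf{F}$, while the left side becomes the $\mathcal{D}$-valued mixed moment $(\mathbf{F}\circ\mathbf{E})(a_{i_1}d_1\cdots a_{i_n})=\sum_{\pi\in NC(n)}\kappa^{\mathcal{D}}_\pi(a_{i_1}d_1,\dots,a_{i_n})$, the moment--cumulant formula in $(\mathcal{A},\mathbf{F}\circ\mathbf{E})$. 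Equating the two expansions and using $\kappa^{\mathcal{B}}_\pi=\kappa^{\mathcal{D}}_\pi$ on these arguments for every $\pi\ne\mathbf{1}_n$, the non-top terms cancel and we are left with $\kappa^{\mathcal{B}}_n(a_{i_1}d_1,\dots,a_{i_n})=\kappa^{\mathcal{D}}_n(a_{i_1}d_1,\dots,a_{i_n})$, which is the asserted identity at level $n$. (Alternatively one could finish by Möbius inversion, since the restricted family and the $\mathcal{D}$-cumulants then have identical moment functions.) I expect the only genuine obstacle to be the bookkeeping in the inductive step: tracking that every insertion produced by the block recursion really stays inside $\mathcal{D}$, so that $\mathbf{F}$ acts trivially and the $\mathcal{B}$- and $\mathcal{D}$-multiplicative extensions never diverge, including the degenerate positions of the peeled-off block; the rest is routine.
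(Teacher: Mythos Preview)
The paper does not prove this proposition; it is merely quoted from \cite{NiShSp} (Nica--Shlyakhtenko--Speicher), so there is no in-paper argument to compare against. Your inductive proof is correct and is essentially the standard one: the hypothesis guarantees that the nested evaluations arising in $\kappa^{\mathcal{B}}_\pi$ stay in $\mathcal{D}$, so the $\mathcal{B}$- and $\mathcal{D}$-multiplicative extensions agree for $\pi\ne\mathbf{1}_n$, and comparing the two moment--cumulant expansions of $(\mathbf{F}\circ\mathbf{E})(a_{i_1}d_1\cdots a_{i_n})$ forces the top cumulants to match.
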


The second proposition gives a characterization of operator-valued freeness by the agreement of operator-valued cumulants in different levels.

\begin{proposition}

\label{thmNiShSp}

Let $1\in\mathcal{D}\subset\mathcal{B},\mathcal{N}\subset\mathcal{A}$ be algebras such that $\left(\mathcal{A},\mathbf{E}\right)$ and $\left(\mathcal{B},\mathbf{F}\right)$ are respectively $\mathcal{B}$-valued and $\mathcal{D}$-valued probability spaces. Then the first of the following statements implies the second.
\begin{itemize}
 \item[i)]
The subalgebras $\mathcal{D},\mathcal{N}$ are free with amalgamation over $\mathcal{B}$.
\item[ii)]
For every $k\in\mathbb{N}$, $n_{1},\dots,n_{k}\in\mathcal{N}$, $b_{1},\dots,b_{k-1}\in\mathcal{B}$, we have 
\begin{equation*}
\kappa_{n}^{\mathcal{B}}\left(n_{1}b_{1},\dots,n_{k-1}b_{k-1},n_{k}\right)=\kappa_{n}^{\mathcal{D}}\left(n_{1}\mathbf{F}\left(b_{1}\right),\dots,n_{k-1}\mathbf{F}\left(b_{k-1}\right),n_{k}\right). 
\end{equation*}
\end{itemize}
Moreover, the two statements become equivalent if we add the faithfulness condition on $\mathbf{F}:\mathcal{B}\to\mathcal{D}$, that if $\mathbf{F}\left(b_{1}b_{2}\right)=0$ for all $b_{2}\in\mathcal{B}$, then $b_{1}=0$.

\end{proposition}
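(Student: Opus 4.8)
The plan is to reduce both implications to the level of moments, working with the composed conditional expectation $\mathbf{F}\circ\mathbf{E}\colon\mathcal{A}\to\mathcal{D}$, and to use Proposition~\ref{thm:Restrictions} as the bridge between the $\mathcal{B}$-valued and the $\mathcal{D}$-valued cumulants. The technical core, which I would isolate as a lemma, is that (i) is equivalent to the following \emph{mixed-moment identity}: for all $n_0,\dots,n_m\in\mathcal{N}$ and $b_1,\dots,b_m\in\mathcal{B}$,
\begin{equation*}
(\mathbf{F}\circ\mathbf{E})\bigl(n_0 b_1 n_1 b_2\cdots b_m n_m\bigr)=(\mathbf{F}\circ\mathbf{E})\bigl(n_0\,\mathbf{F}(b_1)\,n_1\cdots\mathbf{F}(b_m)\,n_m\bigr),
\end{equation*}
i.e.\ inside a $(\mathbf{F}\circ\mathbf{E})$-moment of a word in $\mathcal{N}$ the $\mathcal{B}$-letters may be replaced by their $\mathbf{F}$-projections. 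One direction of this equivalence is immediate: an alternating centred word either contains a $\mathcal{B}$-letter, whose $\mathbf{F}$-projection is $0$ so the right-hand side vanishes, or has length one, where the vanishing is trivial -- and this is exactly the defining vanishing condition for the amalgamated freeness in (i). The other direction is the standard ``freeness forces the mixed-moment formula'' computation.

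For (i)$\Rightarrow$(ii) I would first run that standard computation: write $b_j=\mathbf{F}(b_j)+b_j^\circ$ with $\mathbf{F}(b_j^\circ)=0$; the term in which every $b_j$ is replaced by $\mathbf{F}(b_j)$ gives the right-hand side, while in each other term one absorbs the surviving factors $\mathbf{F}(b_j)\in\mathcal{D}$ into neighbouring $n_i$'s (using $\mathcal{D}\subseteq\mathcal{N}$), recursively centres the resulting $\mathcal{N}$-letters, pulls the $\mathcal{D}$-valued coefficients out of $\mathbf{F}\circ\mathbf{E}$, and merges and re-centres adjacent $\ker\mathbf{F}$-letters, until only genuinely alternating centred words remain, which vanish by (i). I would then lift the identity to cumulants: a short moment--cumulant induction on length (using that $\mathbf{F}$ is faithful on the $W^*$-algebras at hand) shows $\kappa^{\mathcal{B}}_k(n_1 d_1,\dots,n_{k-1}d_{k-1},n_k)\in\mathcal{D}$ for $d_i\in\mathcal{D}$, so Proposition~\ref{thm:Restrictions} identifies these $\mathcal{B}$-cumulants with the corresponding $\mathcal{D}$-cumulants; multilinearity in the $b_j$ then reduces (ii) to this case together with the vanishing of $\kappa^{\mathcal{B}}_k(n_1 b_1,\dots,n_k)$ whenever some $b_j\in\ker\mathbf{F}$, and the latter is read off from the mixed-moment identity.

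For (ii)$\Rightarrow$(i), with $\mathbf{F}$ faithful, I would traverse the bridge in the opposite order. By (ii) the cumulants $\kappa^{\mathcal{B}}_k(n_1 b_1,\dots,n_k)$ already take values in $\mathcal{D}$ and agree with $\mathcal{D}$-cumulants on $\mathcal{D}$-arguments; expanding $(\mathbf{F}\circ\mathbf{E})(n_0 b_1 n_1\cdots b_m n_m)$ by the $\mathcal{B}$-valued moment--cumulant formula, substituting (ii) block by block in the nested cumulants $\kappa^{\mathcal{B}}_\pi$ (the inner cumulants being $\mathcal{D}$-valued, hence fixed by $\mathbf{F}$), and comparing with the $\mathcal{D}$-valued moment--cumulant expansion of $n_0\mathbf{F}(b_1)n_1\cdots\mathbf{F}(b_m)n_m$, one recovers the mixed-moment identity; since that identity is equivalent to (i), we are done. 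Faithfulness of $\mathbf{F}$ is precisely what converts the $\mathbf{F}$-level consequences of (ii) back into the assertion that the $\mathcal{B}$-valued data of $\mathcal{N}$ is governed by $\mathcal{D}$.

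The step I expect to be the main obstacle is the passage between moments and cumulants, needed in both directions: inside the nested cumulants $\kappa^{\mathcal{B}}_\pi$ one must control exactly how the $\mathcal{B}$-insertions are distributed along the blocks of a non-crossing partition and check that an inner block, once shown to produce a value in $\mathcal{D}$, is left unchanged by $\mathbf{F}$ and so passes transparently through the outer blocks -- closing the induction that underlies both the ``lands in $\mathcal{D}$'' claim and the block-by-block substitution is where the combinatorial bookkeeping is delicate. The rest is routine centring together with $\mathcal{B}$- and $\mathcal{D}$-bilinearity.
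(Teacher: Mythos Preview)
The paper does not prove this proposition; it is quoted without proof from \cite{NiShSp} and used as a black box in Section~4. (Note also the evident typo in condition (i): from the application in Section~4 the intended hypothesis is that $\mathcal{B}$ and $\mathcal{N}$ are free with amalgamation over $\mathcal{D}$, not ``$\mathcal{D}$ and $\mathcal{N}$ over $\mathcal{B}$'', which is vacuous since $\mathcal{D}\subset\mathcal{B}$.) So there is no in-paper argument to compare against.

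Your outline follows the strategy of the original \cite{NiShSp} proof: establish the mixed-moment identity for $\mathbf{F}\circ\mathbf{E}$ from freeness, and then transfer between the two levels of cumulants via the non-crossing moment--cumulant machinery and Proposition~\ref{thm:Restrictions}. One point worth flagging: you invoke faithfulness of $\mathbf{F}$ already in the (i)$\Rightarrow$(ii) direction, whereas the proposition as phrased here claims that implication without it. Your instinct is in fact the right one---the mixed-moment identity only constrains $\mathbf{F}\circ\mathbf{E}$, and some form of faithfulness is what lets one conclude that the $\mathcal{B}$-valued cumulants of elements of $\mathcal{N}$ actually land in $\mathcal{D}$; without it one can build finite-dimensional counterexamples (e.g.\ $\mathcal{A}=M_2(\mathbb{C})$, $\mathcal{B}$ the diagonals, $\mathbf{F}$ the $(1,1)$-entry, $\mathcal{N}$ generated by a strictly lower-triangular matrix plus a nonzero $(2,2)$-entry). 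So your plan establishes the equivalence under the faithfulness hypothesis, which is what \cite{NiShSp} actually proves and what the application in Section~4 requires. The nested-block induction you single out as the delicate step is indeed where the combinatorial bookkeeping lives.
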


\subsection{Rectangular Spaces}

Let $\left(\mathcal{A},\tau\right)$ be a tracial $W^{*}$-probability space endowed with pairwise orthogonal, non-trivial projections $p_{1},\dots,p_{k}$ adding up to one. Let $\mathcal{D}:=\left\langle p_{1},\dots,p_{k}\right\rangle $ denote the $W^{*}$-algebra generated by $\left\{ p_{1},\dots,p_{k}\right\} $. Then there exists a unique conditional expectation $\mathbf{F}:\mathcal{A}\to\mathcal{D}$ such that $\tau\circ\mathbf{F}=\tau $. We actually have that
\begin{equation}
\mathbf{F}\left(a\right)=\sum_{i=1}^{k}p_{i}\tau\left(p_{i}\right)^{-1}\tau\left(p_{i}a\right).\label{eq:condexp}
\end{equation}
With this, $\left(\mathcal{A},\mathbf{F}\right)$ becomes a $\mathcal{D}$-valued probability space.

These kind of projection-valued spaces are called \textit{rectangular probability spaces}. We will denote by $\mathcal{A}^{\left(i,j\right)}$ the set of elements  $a\in\mathcal{A}$ such that $a=p_{i}ap_{j}$. Elements in $\bigcup_{1\leq i,j\leq k}\mathcal{A}^{\left(i,j\right)}$ are called \textit{simple} and we write $\mathcal{A}^{\left(i\right)}:=\mathcal{A}^{\left(i,i\right)}$. Very often we will be interested in the compressed spaces $(\mathcal{A}^{\left(i\right)},\tau^{\left(i\right)})$, where $\tau^{\left(i\right)}\left(a\right)=\tau(p_{i})^{-1}\tau\left(a\right),$ for  $a\in\mathcal{A}^{\left(i\right)}$.

These spaces were introduced by Benaych-Georges in \cite{Be1, Be2}. They allow to treat rectangular matrices of different sizes in a single space. 

From now on, we will restrict the use of the caligraphic letter $\mathcal{D}$ to the $W^{*}$-algebra generated by the projections of a rectangular probability space. Therefore, $\mathcal{D}$-freeness will always mean freeness with amalgamation over the algebra of projections which defines the given rectangular probability space. It will be useful to specialize the corresponding notion of asymptotic freeness as well.

\begin{definition}

Let $\left(\mathcal{A},\tau\right)$ be a $(p_{1},\dots,p_{k})$-rectangular space and let $(\mathcal{A}_{n},\tau_{n})_{n\geq1}$ be a sequence of $(p_{1}^{\left(n\right)},\dots,p_{k}^{\left(n\right)})$-rectangular spaces. Let $a_{1},\dots,a_{m}\in\mathcal{A}$ and $a_{1}^{\left(n\right)},\dots,a_{m}^{\left(n\right)}\in\mathcal{A}_{n}$ be collections of simple elements. We say that  $(a_{1}^{\left(n\right)},\dots,a_{m}^{\left(n\right)})$ converges in $\mathcal{D}$-distribution to $(a_{1},\dots,a_{m})$ if  $(a_{1}^{\left(n\right)},\dots,a_{m}^{\left(n\right)},p_{1}^{\left(n\right)},\dots,p_{k}^{\left(n\right)})$ converges in $*$-distribution to  $(a_{1},\dots,a_{m},p_{1},\dots,p_{k})$, and we write
\begin{equation*} 
\left(a_{1}^{\left(n\right)},\dots,a_{m}^{\left(n\right)}\right)\overset{\mathcal{D}}\to (a_{1},\dots,a_{m})\qquad \text{as $n\to\infty$}.
\end{equation*}
If $a_{1},\dots,a_{m}$ are $(p_{1},\dots,p_{k})$-free, we say that $a_{1}^{\left(n\right)},\dots,a_{m}^{\left(n\right)}$ are asymptotically $\mathcal{D}$-free.

\end{definition}

\begin{example}

\label{exRectSp}

Consider the matrices from (\ref{mimomodel}) for a fixed $N$. We set $N_0:=N$ and let $$M^{\left(N\right)}=\sum_{i=0}^{k}N_{i}^{\left(N\right)}.$$ 
We consider pairwise orthogonal projections $P_0,\dots P_k$ in the space $\left(\mathcal{A}_{M},\tau_{M}\right)$ of $M\times M$ random matrices, where each $P_i$ is a projection to a subspace of size $N_i$. Then we can embed our matrices  $U_{1},T_{1},H_{1},H_{1}^{*}\dots,U_{k},T_{k}, H_{k},H_{k}^{*}$ as simple elements in $\mathcal{A}_{M}$, as illustrated below:
$$
\begin{picture}(120,120)

\put(0,0){\line(0,1){120}}
\put(25,0){\line(0,1){120}}
\put(120,0){\line(0,1){120}}
\put(60,95){\line(0,1){25}}
\put(60,60){\line(0,1){35}}
\put(75,0){\line(0,1){45}}
\put(75,95){\line(0,1){25}}
\put(5,105){$P_0$}
\put(29,83){$T_1, P_1$}
\put(29,67){$U_1, U_1^*$}
\put(35,105){$H_1$}
\put(92,105){$H_k$}
\put(5,75){$H_1^*$}
\put(5,20){$H_k^*$}
\put(62,48){$\ddots$}
\put(10,48){$\vdots$}
\put(62,105){$\dots$}
\put(83,28){$T_k, P_k$}
\put(83,12){$U_k, U_k^*$}
\put(0,0){\line(1,0){120}}
\put(0,95){\line(1,0){120}}
\put(0,120){\line(1,0){120}}
\put(25,60){\line(1,0){35}}
\put(0,60){\line(1,0){25}}
\put(75,45){\line(1,0){45}}
\put(0,45){\line(1,0){25}}

\end{picture}
$$
We denote by $\tilde{A}$ the embedding of $A$. A simple element in a (random) matrix rectangular space will be called a \textit{simple matrix}. 

In order to achieve asymptotic freeness we need some requirements on the asymptotic distribution of the involved deterministic matrices (which we try to keep to a minimum). We will assume:
\begin{itemize}
 \item[(A4)] 
Each matrix $T_{i}$ converges in distribution w.r.t. $\frac{1}{N_{i}}\mathrm{Tr}$, and $H_{1}H_{1}^{*},\dots,H_{k}H_{k}^{*}$ converge in joint distribution w.r.t. $\frac{1}{N}\mathrm{Tr}$.
\end{itemize}
On these rectangular spaces, we are prevented from multiplying matrices that do not fit together (e.g. $\tilde{H_{i}}\tilde{H_{j}}=0$). These kind of mixed moments do not show up in the distribution of $\Phi^{\left(N\right)}$, so we will still be able to calculate this distribution by working in these rectangular spaces. The advantage here is that even when we are only asking for the existence of the limiting joint moments of $H_{1}H_{1}^{*},\dots,H_{k}H_{k}^{*}$, we will actually have that $\tilde{H_{1}},\dots,\tilde{H_{k}}$ have a limiting joint $*$-distribution.

\end{example}

We recall results by Benaych-Georges in \cite{Be2} generalizing, in the framework of rectangular spaces, Voiculescu's results on asymptotic freeness of square matrices.

\begin{proposition}

\label{thm:Unitary}

Let $k\geq1$ be fixed and for each $N\geq1$, let $\left(\mathcal{A}_{N},\tau_{N}\right)$ be a $(P_{1}^{\left(N\right)},\dots,P_{k}^{\left(N\right)})$-rectangular probability space of random matrices, such that $\tau_{N}(P_{i}^{\left(N\right)})\to c_{i}\in(0,1]$, $i=1,\dots,k$. Let $(U_{i}^{\left(N\right)})_{i\geq1}$ be a collection of independent simple random matrices in $\mathcal{A}_{N}$, such that for every $N$, $U_{i}^{\left(N\right)}\in\mathcal{A}_{N}^{\left(j\left(i\right)\right)}$ is a Haar-distributed unitary ensemble in the compressed space $(\mathcal{A}_{N}^{\left(j\left(i\right)\right)},\tau_{N}^{\left(j\left(i\right)\right)})$ for some $1\leq j\left(i\right)\leq k$.

Furthermore, let $\mathcal{D}_{N}=(D_{i}^{\left(N\right)})_{i\geq1}$ be a collection of simple deterministic matrices, such that for all $N$,  $D_{i}^{\left(N\right)}\in\mathcal{A}_{N}^{\left(r\left(i\right),s\left(i\right)\right)}$ for some $1\leq r\left(i\right),s\left(i\right),\leq k$, and assume that $(D_{i}^{\left(N\right)})_{i\geq1}$ converge in $\mathcal{D}$-distribution.

Then $\mathcal{D}_{N},U_{1}^{\left(N\right)},U_{2}^{\left(N\right)},\dots$ are asymptotically $\mathcal{D}$-free.
\end{proposition}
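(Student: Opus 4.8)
The plan, in the spirit of the arguments of Voiculescu and of Benaych-Georges \cite{Be2}, is to establish the convergence of all joint $*$-moments of the family $(D_i^{(N)},P_\ell^{(N)},U_i^{(N)},U_i^{(N)*})_{i,\ell}$ and to identify the limit with the corresponding moments in the amalgamated free product model in which the limit of each deterministic matrix sits in its prescribed corner, each $u_i$ is a $\mathcal{D}$-Haar unitary in the corner $p_{j(i)}\mathcal{A}p_{j(i)}$ (that is, $u_iu_i^*=u_i^*u_i=p_{j(i)}$ and $\mathbf{F}(u_i^n)=\delta_{n,0}\,p_{j(i)}$), and the deterministic family together with $u_1,u_2,\dots$ is $\mathcal{D}$-free. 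Such a model exists as an amalgamated free product over $\mathcal{D}$, using that the $D_i^{(N)}$ already converge in $\mathcal{D}$-distribution by hypothesis, so a matching of moments yields simultaneously the convergence in $\mathcal{D}$-distribution and the asserted asymptotic $\mathcal{D}$-freeness. By \eqref{eq:condexp}, which expresses $\mathbf{F}$ through $\tau_N$ and the $P_\ell^{(N)}$, everything reduces to computing, for each word $w$ in the generators, the limit of the normalized expected trace $\mathbb{E}[\tau_N(w)]$ (and checking that fluctuations are negligible). After absorbing the projections into the adjacent deterministic stretches, a generic such word has the cyclic form $w=E_0V_1E_1V_2\cdots V_mE_m$, where each $E_t$ is a monomial in the $D_i^{(N)}$ and $P_\ell^{(N)}$ and each $V_t$ is one of the $U_{i_t}^{(N)}$ or $U_{i_t}^{(N)*}$.

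The next step is to integrate out the unitaries. Since the $U_i^{(N)}$ are independent and each is Haar distributed on $\mathcal{U}(N_{j(i)}^{(N)})$, the expectation $\mathbb{E}_U[\tau_N(w)]$ factorizes over the distinct indices $i$ appearing in $w$, and for each one applies the Weingarten formula. Expanding the trace over matrix entries, this yields a sum over pairs of permutations — one matching the occurrences of $U_i^{(N)}$ with those of $U_i^{(N)*}$, one doing the index bookkeeping — of terms consisting of a product of Weingarten weights $\mathrm{Wg}(N_{j(i)}^{(N)},\cdot)$ times a product of $\mathrm{Tr}$'s of deterministic sub-words, obtained by gluing the $E_t$'s along the cycles dictated by the permutations, all divided by the overall dimension. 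A term is nonzero only when the matchings are compatible with the corner structure: a $U_i^{(N)}$-letter can be glued only to a $U_i^{(N)*}$-letter whose surrounding deterministic pieces route the relevant indices through the same block of size $N_{j(i)}^{(N)}$; this is forced by the matrix sizes and simply has to be tracked.

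Then comes the leading-order (genus) analysis. With $\tau_N(P_\ell^{(N)})=N_\ell^{(N)}/N\to c_\ell$ — so that $\tfrac{1}{N}\mathrm{Tr}$ and the compressed trace $\tfrac{1}{N_\ell^{(N)}}\mathrm{Tr}$ differ asymptotically only by the factor $c_\ell$ — and the standard asymptotics $\mathrm{Wg}(n,\pi)=O(n^{\#(\pi)-2p})$ (with explicit top-order coefficient), an Euler-characteristic count of index-loops against Weingarten weights shows that each term carries a power of $N$ which is maximal precisely for a restricted ``planar'' family of matchings, where it equals $N^0$, all other matchings being $o(1)$. This planar family is exactly the one reproducing the moment-cumulant combinatorics of a $\mathcal{D}$-Haar unitary $\mathcal{D}$-free from the limiting deterministic family and from the other $u_{i'}$; hence $\lim\mathbb{E}[\tau_N(w)]$ equals the $w$-moment in the free product model above. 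A standard second-moment estimate, $\mathrm{Var}(\tau_N(w))=O(N^{-2})$, upgrades convergence of expectations to almost sure convergence; combined with the hypothesis on the $D_i^{(N)}$, this gives the convergence of the full joint $*$-distribution, hence the asymptotic $\mathcal{D}$-freeness.

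The main obstacle is the bookkeeping of the corner/projection structure throughout the Weingarten computation: verifying that only corner-compatible gluings survive, converting correctly everywhere between the ambient trace $\tau_N$ and the compressed traces $\tau_N^{(\ell)}$, and — most delicately — checking that the surviving planar matchings really produce the normalization $N^0$ with no leftover power of the $c_\ell$, equivalently that the limiting object is genuinely a $\mathcal{D}$-Haar unitary in the corner $p_{j(i)}$ in the precise sense above. A secondary difficulty is words in which several of the $U_i^{(N)}$ and the deterministic matrices interleave intricately: one must confirm that no cancellations or enhancements arise across different $U_i^{(N)}$-blocks, which is where the independence of the $U_i^{(N)}$ (to factor the Weingarten sum) and the planarity estimate applied block by block enter. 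One might instead try to reduce to the scalar-valued asymptotic freeness of Haar unitaries, working with the full trace $\tau_N$ on $\mathcal{A}_N$ and lifting to $\mathcal{D}$-freeness via Propositions \ref{thm:Restrictions} and \ref{thmNiShSp}, but these unitaries carry a fixed block structure (Haar on one corner only) rather than being Haar on all of $\mathcal{A}_N$, so that route would in the end still require the same Weingarten input.
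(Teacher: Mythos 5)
Your outline is correct and follows exactly the route the paper itself takes: the paper cites Proposition \ref{thm:Unitary} from \cite{Be2} without proof, but its own proof of the closely analogous Theorem \ref{thm:main-rectangular} in Section 5 proceeds by the same Weingarten expansion, corner-compatibility bookkeeping for the permutations (the sets $S_{n,r,r'}$), and the planarity/leading-order count identifying the surviving terms with the $\mathcal{D}$-free moment formula. The only superfluous element in your sketch is the variance estimate, since $\tau_N$ here already incorporates the expectation.
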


\begin{remark}

As should be expected, if in the preceding theorem one considers Gaussian or non self-adjoint Gaussian simple matrices instead of Haar unitary simple matrices, the same $\mathcal{D}$-freeness results hold, where the corresponding random matrices converge to semicircular or circular simple elements in a limiting rectangular space.

In \cite{Be1} the case where $\tau_{N}\left(P_{1}\right)\to0$ is also addressed.

From the combinatorial point of view, the computation of the asymptotic scalar-valued moments of a product of rectangular matrices is performed in the same way as in the usual, square case, namely, it can be obtained as a sum of products of moments of smaller degree, running over certain collections of non-crossing partitions. These phenomena will be better observed later in our proof of the asymptotic freeness of randomly rotated deterministic matrices.

Roughly speaking, the only difference from the square case is that one needs to scale each moment by taking the different sizes of the matrices into consideration. This scaling is automatically performed by the use of the $\mathcal{D}$-distribution, which puts the right weights on each moment. In particular, the order of convergence to the moments of free operators is the same as in the square case.

We also want to remark that, by following the arguments for proving asymptotic freeness between deterministic and Wigner matrices (see \cite{AnGuZe, Mingo-Speicher}), the previous results should admit a straight-forward generalization to the case where the Gaussian matrices are replaced by Wigner matrices. 

\end{remark}

\subsection{Randomly Rotated Matrices.}

A well-known result from scalar-valued free probability states that one can consider a pair of collections  $\mathcal{D}_{1}^{\left(N\right)},\mathcal{D}_{2}^{\left(N\right)}$ of deterministic matrix ensembles, where each collection of ensembles converges in distribution (with respect to $\frac{1}{N}\mathrm{Tr}$). Then if we randomly rotate all the matrices in one of the collections by conjugating with a Haar-distributed unitary matrix, the resulting collections $\mathcal{D}_{1}^{\left(N\right)},U_{N}\mathcal{D}_{2}^{\left(N\right)}U_{N}^{*}$ are asymptotically free.
 
We do not require the existence of an asymptotic joint distribution of $\mathcal{D}_{1}^{\left(N\right)},\mathcal{D}_{2}^{\left(N\right)}$. It follows from the Weingarten formulas for the joint distribution of the entries of Haar-distributed matrices, that these mixed moments do not appear in the calculations, see, e.g., \cite{Co,NiSp}.

The following theorem is a generalization of the previous result in the rectangular framework.

\begin{theorem}\label{thm:main-rectangular}

Let $k\geq1$ be fixed and for each $N\in\mathbb{N}$, let $\left(\mathcal{A}_{N},\mathbb{E}\circ\frac{1}{N}\mathrm{Tr}\right)$ be a space of $N\times N$ random matrices with the structure of a $(P_{1}^{\left(N\right)},\dots,P_{k}^{\left(N\right)})$-rectangular probability space, with simple elements  $U_{i}^{\left(N\right)}\in\mathcal{A}_{N}^{\left(i\right)}$ such that:
\begin{itemize}
 \item[i)] 
$\lim_{N\to\infty}\frac{1}{N}\mathrm{Tr}(P_{i}^{(N)})=c_{i}>0$,
\item[ii)] 
$U_{1}^{\left(N\right)},\dots,U_{k}^{\left(N\right)}$ are independent (entrywise). Each $U_{i}$ is a Haar-distributed unitary random matrix in the compressed space  $(\mathcal{A}_{N}^{\left(i\right)},\tau_{N}^{\left(i\right)})$. 
\end{itemize}
Put $\mathbf{U}_{N}:=U_{1}^{\left(N\right)}+\dots+U_{k}^{\left(N\right)}$ and let $$\mathcal{D}_{1}^{\left(N\right)}=\left\{ C_{1}^{\left(N\right)},\dots,C_{p}^{\left(N\right)}\right\}\qquad\text{and}\qquad \mathcal{D}_{2}^{\left(N\right)}=\left\{ D_{1}^{\left(N\right)},\dots,D_{q}^{\left(N\right)}\right\}$$ be collections of simple deterministic matrices, each one with asymptotic $\mathcal{D}$-distribution. Consider $$ \mathbf{U}_{N}\mathcal{D}_{2}^{\left(N\right)}\mathbf{U}_{N}^{*}:=\left(\mathbf{U}_{N}D\mathbf{U}_{N}^{*}:D\in\mathcal{D}_{2}^{\left(N\right)}\right).$$ Then $\mathcal{D}_{1}^{\left(N\right)}$ and $\mathbf{U}_{N}\mathcal{D}_{2}^{\left(N\right)}\mathbf{U}_{N}^{*}$ are asymptotically $\mathcal{D}$-free.

\end{theorem}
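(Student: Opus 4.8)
The plan is to reduce the statement, via the explicit form \eqref{eq:condexp} of the conditional expectation onto the finite-dimensional algebra $\mathcal{D}$, to a computation of scalar-valued moments, and then to carry that computation out with the Weingarten calculus for Haar unitaries. Write $\tau_N:=\mathbb{E}\circ\frac1N\mathrm{Tr}$. Asymptotic $\mathcal{D}$-freeness of $\mathcal{D}_1^{(N)}$ and $\mathbf{U}_N\mathcal{D}_2^{(N)}\mathbf{U}_N^*$ means precisely that the joint $*$-distribution of $(\mathcal{D}_1^{(N)},\mathbf{U}_N\mathcal{D}_2^{(N)}\mathbf{U}_N^*,P_1^{(N)},\dots,P_k^{(N)})$ with respect to $\frac1N\mathrm{Tr}$ converges and that its limit is the $\mathcal{D}$-free product of the two limiting $\mathcal{D}$-distributions. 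Since $\mathbf{U}_N\mathbf{U}_N^*=\mathbf{U}_N^*\mathbf{U}_N=\mathbf{1}$ and $\mathbf{U}_NP_i\mathbf{U}_N^*=P_i$, every element of the algebra generated by $\mathbf{U}_N\mathcal{D}_2^{(N)}\mathbf{U}_N^*$ and $\mathcal{D}$ is of the form $\mathbf{U}_N\tilde D\mathbf{U}_N^*$ with $\tilde D$ in the algebra generated by $\mathcal{D}_2^{(N)}$ and $\mathcal{D}$; using in addition the traciality of $\frac1N\mathrm{Tr}$, everything comes down to understanding the limit of
\[
\tau_N\bigl(\tilde C_1\cdot \mathbf{U}_N\tilde D_1\mathbf{U}_N^*\cdot \tilde C_2\cdot \mathbf{U}_N\tilde D_2\mathbf{U}_N^*\cdots \tilde C_m\cdot \mathbf{U}_N\tilde D_m\mathbf{U}_N^*\bigr),
\]
where each $\tilde C_r$ is a simple element of the $*$-algebra generated by $\mathcal{D}_1^{(N)}$ and $\mathcal{D}$, and each $\tilde D_r$ a simple element of the one generated by $\mathcal{D}_2^{(N)}$ and $\mathcal{D}$ (the purely one-sided words reduce to moments covered by the hypothesis on $\mathcal{D}_1^{(N)}$, resp.\ by Proposition~\ref{thm:Unitary} applied to $\mathcal{D}_2^{(N)}$ alone).

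Next I would expand $\mathbf{U}_N=\sum_{i=1}^k\tilde U_i$ and use $P_i\tilde U_j=\delta_{ij}\tilde U_i=\tilde U_jP_i$: each factor $\mathbf{U}_N\tilde D_r\mathbf{U}_N^*$ collapses to $\tilde U_{a_r}\tilde D_r\tilde U_{b_r}^*$ with $a_r,b_r$ the uniquely determined block indices of $\tilde D_r$, and the word is nonzero only when the projections match up around the cycle. Writing the trace in coordinates and taking the expectation over the independent Haar unitaries $U_1,\dots,U_k$ by the Weingarten formula (see, e.g., \cite{Co,NiSp}), the expectation factorizes over $i$, and for each $i$ one sums over a pair of permutations $\sigma_i,\rho_i$ of the occurrences of $\tilde U_i$ weighted by $\mathrm{Wg}(\sigma_i\rho_i^{-1};\operatorname{rank}P_i^{(N)})$, whose effect is to pair each leg of a $\tilde U_i$ with a leg of a $\tilde U_i^*$. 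After these contractions the index sum decomposes into closed loops; reading off the matrices met along each loop turns the moment above into a sum, indexed by the $\sigma_i,\rho_i$, of products of normalized traces of words in the $\tilde C_r$'s and the $P_i$'s and of normalized traces of words in the $\tilde D_r$'s and the $P_i$'s, weighted by the Weingarten functions and by powers of $\operatorname{rank}P_i^{(N)}/N$ produced by the loops.

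Then I would run the usual genus estimate on $\mathrm{Wg}$: as $N\to\infty$ with $\operatorname{rank}P_i^{(N)}\sim c_iN$, only the contributions of non-crossing pairings survive, each giving a finite product of limiting $\mathcal{D}$-moments times a fixed monomial in the $c_i$, and everything else is $O(N^{-1})$. The structural point --- exactly as in the classical square scalar-valued argument, with the rectangular rescaling now produced automatically by the constants $c_i$ --- is that every $\tilde C_r$ sits between a $\tilde U^*$ on its left and a $\tilde U$ on its right while every $\tilde D_r$ sits the other way around, so no surviving loop can contain both a $\tilde C$ and a $\tilde D$. Hence the limit is expressed purely through the \emph{separate} limiting $\mathcal{D}$-distributions of $\mathcal{D}_1^{(N)}$ and of $\mathcal{D}_2^{(N)}$; this is exactly why no joint limiting distribution of the two deterministic families has to be assumed, and in particular the limits of all these moments exist, so the joint $*$-distribution converges. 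It remains to identify the resulting combinatorial expression with the $\mathcal{D}$-free value; one can either recognize the sum over non-crossing pairings, after the standard passage to the complementary partition and re-insertion of the weights $c_i$, as the operator-valued moment-cumulant expansion of a $\mathcal{D}$-free pair, or argue indirectly: the formula just obtained depends only on the two marginal limiting $\mathcal{D}$-distributions, so it must agree with the value it takes when $\mathcal{D}_1^{(N)},\mathcal{D}_2^{(N)}$ are replaced by deterministic models with those same marginals but a joint limit (e.g.\ diagonal matrices conjugated by a discrete Fourier matrix), a situation in which $\mathcal{D}_1^{(N)}\cup\mathcal{D}_2^{(N)}$ converges in $\mathcal{D}$-distribution and is therefore, by Proposition~\ref{thm:Unitary}, asymptotically $\mathcal{D}$-free from $U_1^{(N)},U_2^{(N)},\dots$, from which the standard rotation argument for a free Haar unitary yields that $\mathcal{D}_1^{(N)}$ is asymptotically $\mathcal{D}$-free from $\mathbf{U}_N\mathcal{D}_2^{(N)}\mathbf{U}_N^*$.

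The routine parts of this plan are the genus estimate and the separation-of-loops argument, which are essentially the classical ones. The main obstacle --- the genuinely new content compared with the square scalar-valued statement --- is the careful bookkeeping of the rectangular structure inside the Weingarten calculus: keeping track of which of the $k$ unitaries occurs at each slot, checking that the projections match around every loop, controlling the degenerate ``non-square'' situations in which some compressions vanish, and, most importantly, verifying that the powers of $\operatorname{rank}P_i^{(N)}/N$ coming from the loops assemble into exactly the weights $c_i$ that turn the limit into the $\mathcal{D}$-valued free moment rather than a merely rescaled version of it.
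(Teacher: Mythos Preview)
Your proposal is correct and, in its main line, follows the paper's proof: reduce to scalar moments via \eqref{eq:condexp}, expand by the Weingarten calculus factorized over the $k$ independent Haar unitaries, isolate the non-crossing terms by the genus estimate, observe that the $C$-traces and $D$-traces decouple, and match the result with the $\mathcal{D}$-free formula. For the matching step the paper takes your first option and makes it fully explicit: Proposition~\ref{opvalmo} computes the mixed $\mathcal{D}$-moments of a $\mathcal{D}$-free pair via the operator-valued moment--cumulant formula as a sum over pairs $\sigma\leq\pi$ in a restricted set $NC(n)_{r,r'}$, carrying an explicit weight $\omega(\sigma,\pi)_{r,r'}$ and a M\"obius factor; the Weingarten side is then checked term by term against this, the powers of $\tau(P_m)$ producing exactly the weight and the leading Weingarten coefficients $\phi$ producing the M\"obius function.

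Your alternative indirect route (replace the deterministic families by models with a joint limit and invoke Proposition~\ref{thm:Unitary} together with a $\mathcal{D}$-valued rotation argument) is a legitimate shortcut, but the construction you hint at (``diagonal matrices conjugated by a discrete Fourier matrix'') does not obviously realize an arbitrary prescribed rectangular $\mathcal{D}$-distribution across several block sizes. The clean way to make that argument rigorous is to extract, by a diagonal argument, a subsequence of the original $(\mathcal{D}_1^{(N)},\mathcal{D}_2^{(N)})$ along which all joint moments converge (they are bounded, via H\"older, by the convergent marginal moments), apply Proposition~\ref{thm:Unitary} and the rotation argument there, and then note that the Weingarten limit --- which you have already shown depends only on the marginals --- must therefore agree with the $\mathcal{D}$-free value on the full sequence. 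This buys you the identification without writing down Proposition~\ref{opvalmo}; the paper's direct approach, by contrast, displays the combinatorial structure (the set $NC(n)_{r,r'}$ and the weights) explicitly, which is of independent interest.
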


If we assume that $\mathcal{D}_{1}^{\left(N\right)},\mathcal{D}_{2}^{\left(N\right)}$ have a joint distribution, then the previous result is an easy corollary from Proposition \ref{thm:Unitary}. Without such an assumption, the proof of Theorem \ref{thm:main-rectangular} follows the lines of the square case, and we will leave it to the last section. First we want to show how it enables us to compute the asymptotic distribution of $\Phi_{N}$ from Equation \eqref{mimomodel}.

\begin{corollary}

\label{maintheo}

Let $U_{1}^{\left(N\right)},\dots,U_{k}^{\left(N\right)}$, $H_{1}^{\left(N\right)},\dots,H_{k}^{\left(N\right)}$ and $T_{1}^{\left(N\right)},\dots,T_{k}^{\left(N\right)}$ be deterministic ensembles satisfying conditions (A1)-(A4) and consider their embedings in the rectangular spaces described in Example \ref{exRectSp}. Then 
\begin{equation*}
\left(\tilde{H}_{1}^{\left(N\right)},\dots,\tilde{H}_{k}^{\left(N\right)},
\tilde{U}_{1}^{\left(N\right)}\tilde{T}_{1}^{\left(N\right)}
\tilde{U}_{1}^{\left(N\right)*},
\dots,\tilde{U}_{k}^{\left(N\right)}\tilde{T}_{k}^{\left(N\right)}
\tilde{U}_{k}^{\left(N\right)*}\right)
\end{equation*}
converges in joint $*$-distribution to a collection of elements $\left(h_{1},\dots,h_{k},t_{1},\dots,t_{k}\right)$ in a $W^{*}$-probability space $\left(\mathcal{A},\tau\right)$, which has the structure of a $\left(p_{0},\dots,p_{k}\right)$-rectangular probability space, satisfying
\begin{itemize}
 \item[i)] $\tau\left(p_{i}\right)=c_{i}$.
\item[ii)] $h_{i}\in\mathcal{A}^{\left(0,i\right)}$ and $t_{i}\in\mathcal{A}^{\left(i\right)}$, $1\leq i\leq k$.
\item[iii)] $\left\langle h_{1},\dots,h_{k}\right\rangle, t_{1},\dots,t_{k}$ are free with amalgamation over $\left\langle p_{0},\dots,p_{k}\right\rangle $.
\end{itemize}
\end{corollary}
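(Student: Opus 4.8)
\emph{Strategy.} I would deduce the corollary from Theorem~\ref{thm:main-rectangular} on randomly rotated rectangular matrices; the point of the rectangular embedding of Example~\ref{exRectSp} is precisely to upgrade the very weak hypothesis (A4) into a genuine limiting joint $*$-distribution for the embedded matrices. (All convergence statements for the random matrices are understood in the sense of $\mathbb{E}\circ\frac1N\operatorname{Tr}$, as in Theorem~\ref{thm:main-rectangular}; to have the limits live in an honest tracial $W^*$-probability space one should in addition assume uniform norm bounds on the deterministic matrices, and if some $T_i$ is not self-adjoint one reads (A4) as convergence in $*$-distribution.)

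\emph{Step 1: the limiting rectangular space.} In $(\mathcal{A}_M,\tau_M)$ with the projections $P_0,\dots,P_k$ of Example~\ref{exRectSp} one has $\tilde H_i=P_0\tilde H_iP_i$ and $\tilde T_i=P_i\tilde T_iP_i$, so that $\tilde H_i\tilde H_j=0$ and $\tilde H_i^*\tilde H_j^*=0$ for all $i,j$, $\tilde H_i\tilde H_j^*=0$ for $i\neq j$, and any $*$-monomial involving both $\tilde T_i$ and $\tilde T_j$ for $i\neq j$ vanishes. The key bookkeeping observation is that, because of these corner constraints, in any $*$-monomial in $\tilde H_1^{(N)},\dots,\tilde H_k^{(N)}$ and the projections having nonzero trace every occurrence of $\tilde H_i$ is (after absorbing adjacent projections) immediately followed by $\tilde H_i^*$; hence such a trace is, up to the scalars $\tau_M(P_j)=N_j/M$, a word in the blocks $\tilde H_i\tilde H_i^*=\widetilde{H_iH_i^*}$, e.g.
\[
\tau_M\bigl(\tilde H_{j_1}\tilde H_{j_1}^*\cdots\tilde H_{j_m}\tilde H_{j_m}^*\bigr)=\frac{N}{M}\cdot\frac1N\operatorname{Tr}\bigl(H_{j_1}H_{j_1}^*\cdots H_{j_m}H_{j_m}^*\bigr),
\]
which converges by (A1) and (A4). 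Thus $(\tilde H_1^{(N)},\dots,\tilde H_k^{(N)},P_0^{(N)},\dots,P_k^{(N)})$ converges in $*$-distribution, which provides a limiting tracial $W^*$-probability space $(\mathcal{A},\tau)$ with its $(p_0,\dots,p_k)$-rectangular structure, gives $h_i\in\mathcal{A}^{(0,i)}$ and items (i)--(ii), and shows that the $\tilde H_i^{(N)}$ converge in $\mathcal{D}$-distribution to $(h_1,\dots,h_k)$. Similarly, since monomials mixing $\tilde T_i$ and $\tilde T_j$ for $i\neq j$ vanish, (A4) gives that each $\tilde T_i^{(N)}$ converges separately in $*$-distribution, so $\{\tilde T_1^{(N)},\dots,\tilde T_k^{(N)}\}$ converges in $\mathcal{D}$-distribution to elements $\theta_i\in\mathcal{A}^{(i)}$.

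\emph{Step 2: randomly rotating.} I would then apply Theorem~\ref{thm:main-rectangular} — in the $(P_0,\dots,P_k)$-rectangular space of Example~\ref{exRectSp}; the corner $P_0$ carries no unitary, which does not affect the argument — with $\mathcal{D}_1^{(N)}=\{\tilde H_1^{(N)},\dots,\tilde H_k^{(N)}\}$ and $\mathcal{D}_2^{(N)}=\{\tilde T_1^{(N)},\dots,\tilde T_k^{(N)}\}$ (both converge in $\mathcal{D}$-distribution by Step 1, and, crucially, the theorem does \emph{not} require a \emph{joint} distribution of these two collections, which (A4) does not supply), and with $\mathbf{U}_N=\tilde U_1^{(N)}+\dots+\tilde U_k^{(N)}$, the $\tilde U_i^{(N)}$ being the independent Haar unitaries of (A3) in the corners $\mathcal{A}_N^{(i)}$. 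Because $\tilde U_j^{(N)}\tilde T_i^{(N)}=\delta_{ij}\tilde U_i^{(N)}\tilde T_i^{(N)}$ by orthogonality of the corners, one has $\mathbf{U}_N\tilde T_i^{(N)}\mathbf{U}_N^*=\tilde U_i^{(N)}\tilde T_i^{(N)}\tilde U_i^{(N)*}$, so Theorem~\ref{thm:main-rectangular} yields that
\[
\bigl(\tilde H_1^{(N)},\dots,\tilde H_k^{(N)},\tilde U_1^{(N)}\tilde T_1^{(N)}\tilde U_1^{(N)*},\dots,\tilde U_k^{(N)}\tilde T_k^{(N)}\tilde U_k^{(N)*}\bigr)
\]
converges in joint $*$-distribution to some $(h_1,\dots,h_k,t_1,\dots,t_k)$ with $\langle h_1,\dots,h_k\rangle$ and $\langle t_1,\dots,t_k\rangle$ free with amalgamation over $\mathcal{D}$; moreover $t_i\in\mathcal{A}^{(i)}$, and $t_i$ has the $*$-distribution of $\theta_i$ since conjugation by a unitary of the $i$-th corner leaves $*$-moments with respect to $\tau$ unchanged.

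\emph{Step 3: from two-fold to $(k{+}1)$-fold freeness, and conclusion.} It remains to strengthen the two-fold freeness into (iii). Elements lying in pairwise orthogonal corners $\mathcal{A}^{(1)},\dots,\mathcal{A}^{(k)}$ are automatically free with amalgamation over $\mathcal{D}=\langle p_0,\dots,p_k\rangle$: for $d_\ell\in\mathcal{D}$ every product $t_{i_1}d_1t_{i_2}d_2\cdots d_{m-1}t_{i_m}$ with two consecutive indices distinct vanishes (there $p_{i_\ell}d_\ell p_{i_{\ell+1}}=0$), and a routine M\"obius/induction argument then forces all mixed $\mathcal{D}$-cumulants of $t_1,\dots,t_k$ to vanish; hence $t_1,\dots,t_k$ are mutually $\mathcal{D}$-free. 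Combining this with Step 2 and the associativity of freeness with amalgamation, $\langle h_1,\dots,h_k\rangle,t_1,\dots,t_k$ are jointly free with amalgamation over $\langle p_0,\dots,p_k\rangle$, i.e.\ (iii); together with (i) and (ii) this proves the corollary (the full joint $*$-distribution of $(h_i,t_i,p_j)$ is then determined by the $\mathcal{D}$-distribution of $(h_1,\dots,h_k)$, the individual $*$-distributions of the $t_i$, and (iii)). The main obstacle is not a single estimate but the care needed in Step 1: one must verify that the rectangular embedding suppresses precisely those joint moments of the $H_i$ (and of the $H_i$ together with the $T_i$) which (A4) does not control, so that the minimal hypothesis (A4) really suffices; granting that, the rest is a direct application of Theorem~\ref{thm:main-rectangular} together with elementary facts about orthogonal corners.
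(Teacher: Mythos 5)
Your proof is correct and follows the route the paper intends: the paper itself gives no written proof of Corollary~\ref{maintheo}, merely stating it immediately after Theorem~\ref{thm:main-rectangular} as the way that theorem ``enables us to compute the asymptotic distribution of $\Phi_N$.'' You have supplied exactly the missing details — the bookkeeping in Example~\ref{exRectSp} that upgrades (A4) to a $\mathcal{D}$-distribution (which the paper itself sketches in the remark ending Example~\ref{exRectSp}), the application of Theorem~\ref{thm:main-rectangular} with $\mathcal{D}_1^{(N)}=\{\tilde H_i\}$ and $\mathcal{D}_2^{(N)}=\{\tilde T_i\}$, and the observation that elements in pairwise orthogonal corners are automatically $\mathcal{D}$-free, combined with associativity of amalgamated freeness.

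One small remark on Step~3: rather than invoking a ``routine M\"obius/induction argument'' on mixed cumulants, it is cleaner and shorter to note that the centering $a-\mathbf{F}(a)$ of any $a\in\langle t_i,\mathcal{D}\rangle$ still lies in the corner $\mathcal{A}^{(i)}$ (because $\mathbf{F}(a')=p_i\tau^{(i)}(a')$ for $a'\in\mathcal{A}^{(i)}$ and the $\mathcal{D}$-component of $a$ cancels against its conditional expectation), so that any alternating product of centered elements from distinct $\langle t_i,\mathcal{D}\rangle$'s vanishes outright by orthogonality of the corners; this verifies $\mathcal{D}$-freeness directly from the definition. Also, to apply Theorem~\ref{thm:main-rectangular} verbatim in the $(P_0,\dots,P_k)$-setting one may simply adjoin an auxiliary independent Haar unitary $U_0^{(N)}\in\mathcal{A}_N^{(0)}$; since no $\tilde T_i$ has a component in corner $0$, this extra unitary is invisible in $\mathbf{U}_N\mathcal{D}_2^{(N)}\mathbf{U}_N^*$, which makes your remark about ``the corner $P_0$ carries no unitary'' precise.
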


We could use the previous corollary to describe the asymptotic distribution of $\Phi_{N}$. Instead, we will introduce in the next section an approach that allows us to work with finite $N$, motivated by our knowledge of the limiting behavior of the involved matrices.

\section{Free Deterministic Equivalents}

To simplify our exposition, we first introduce the free deterministic equivalents for the case where all the matrices are square and have the same size. Then it will be natural to generalize it to the case of rectangular matrices with different sizes.

\subsection{Square matrices}

Voiculescu's asymptotic freeness results on square random matrices state that if we consider tuples of independent random matrix ensembles, such as Gaussian, Wigner or Haar unitaries, their collective behavior in the large $N$ limit is that of a corresponding collection of free (semi-)circular and Haar unitary operators. Moreover, if we consider these random ensembles along with deterministic ensembles, having a given asymptotic distribution (with respect to the normalized trace), the corresponding limiting operators become free from the random elements.

We consider now a collection of independent random and deterministic $N\times N$ matrices: 
\begin{eqnarray*}
\mathbf{X}_{N}=\left\{ X_{1}^{\left(N\right)},\dots,X_{i\left(1\right)}^{\left(N\right)}\right\}  & : & \mbox{independent self-adjoint Gaussian matrices}\\
\mathbf{Y}_{N}=\left\{ Y_{1}^{\left(N\right)},\dots,Y_{i\left(2\right)}^{\left(N\right)}\right\}  & : & \mbox{independent non self-adjoint Gaussian matrices}\\
\mathbf{U}_{N}=\left\{ U_{1}^{\left(N\right)},\dots,U_{i\left(3\right)}^{\left(N\right)}\right\}  & : & \mbox{independent Haar-distributed unitary matrices}\\
\mathbf{D}_{N}=\left\{ D_{1}^{\left(N\right)},\dots,D_{i\left(4\right)}^{\left(N\right)}\right\}  & : & \mbox{deterministic matrices,}
\end{eqnarray*}
and a polynomial in non-commuting variables (and their adjoints) evaluated in our matrices 
\begin{equation*}
P\left(X_{1}^{\left(N\right)},\dots,X_{i\left(1\right)}^{\left(N\right)},Y_{1}^{\left(N\right)},\dots,Y_{i\left(2\right)}^{\left(N\right)},U_{1}^{\left(N\right)},\dots,U_{i\left(3\right)}^{\left(N\right)},D_{1}^{\left(N\right)},\dots,D_{i\left(4\right)}^{\left(N\right)}\right)=:P_{N}.
\end{equation*}

Relying on the above mentioned asymptotic freeness results, we can then compute the asymptotic distribution of $P_{N}$ with respect to the expected normalized trace $\mathbb{E}\circ\frac{1}{N}\mathrm{Tr}=:\tau_{N}$, (which, for the case of self-adjoint polynomials, $P_{N}^*=P_{N}$, coincides with the expected spectral distribution of $P_{N}$) by going over the limit. We know that we can find collections $\mathbf{S},\mathbf{C},\mathbf{U},\mathbf{D}$ of operators in a non-commutative probability space $\left(\mathcal{A},\tau\right)$,
\begin{eqnarray*}
\mathbf{S}=\left\{ s_{1},\dots,s_{i\left(1\right)}\right\}  & : & \mbox{free semi-circular elements}\\
\mathbf{C}=\left\{ c_{1},\dots,c_{i\left(2\right)}\right\}  & : & \mbox{free circular elements}\\
\mathbf{U}=\left\{ u_{1},\dots,u_{i\left(3\right)}\right\}  & : & \mbox{free Haar unitaries}\\
\mathbf{D}=\left\{ d_{1},\dots,d_{i\left(4\right)}\right\}  & : & \mbox{abstract elements,}
\end{eqnarray*}
such that $\mathbf{S},\mathbf{C},\mathbf{U},\mathbf{D}$ are $*$-free and the joint distribution of $d_{1},\dots d_{i\left(4\right)}$ is given by the asymptotic joint distribution of $D_{1}^{\left(N\right)},\dots,D_{i\left(4\right)}^{\left(N\right)}$. Then, the expected asymptotic distribution of $P_{N}$ is that of 
\begin{equation*}
P\left(s_{1},\dots,s_{i\left(1\right)},c_{1},\dots,c_{i\left(2\right)},u_{1},\dots,u_{i\left(3\right)},d_{1},\dots,d_{i\left(4\right)}\right)=:P_{\infty}, 
\end{equation*}
in the sense that, for all $k$,
\begin{equation*}
\lim_{N\to\infty}\tau_{N}\left(P_{N}^{k}\right)=\tau\left(P_{\infty}^{k}\right). 
\end{equation*}

In this way we can reduce the problem of the asymptotic distribution of $P_N$ to the study of the distribution of $P_{\infty}$, and this can be done by using the tools of free probability, due to the nice freeness relations exhibited by $\mathbf{S},\mathbf{C},\mathbf{U},\mathbf{D}$.

A common obstacle of this procedure is that our deterministic matrices may not have an asymptotic joint distribution. But now that we understand the limiting behavior of these random and deterministic matrices, it is natural to consider, for a fixed $N$, the corresponding ``free model'' 
\begin{equation*}
P\left(s_{1},\dots,s_{i\left(1\right)},c_{1},\dots,c_{i\left(2\right)},u_{1},\dots,u_{i\left(3\right)},d_{1}^{\left(N\right)},\dots,d_{i\left(4\right)}^{\left(N\right)}\right)=:P_{N}^{\square} 
\end{equation*}
where, just as before, the random matrices are replaced by the corresponding free operators in some space $\left(\mathcal{A}_{N},\varphi_{N}\right)$, but now we let the distribution of $d_{1}^{\left(N\right)},\dots,d_{i\left(4\right)}^{\left(N\right)}$ be exactly the same as the one of $D_{1}^{\left(N\right)},\dots,D_{i\left(4\right)}^{\left(N\right)}$ w.r.t $\frac{1}{N}\mathrm{Tr}$.
The free model $P_{N}^{\square}$ will be called the \textit{free deterministic equivalent} for $P_{N}$.

Once again we will be able to deal with $P_{N}^{\square}$ by the means of free probability. The difference between the distribution of $P_{N}^{\square}$ and the expected distribution of $P_{N}$ is given by the deviation from freeness of $\mathbf{X}_{N},\mathbf{Y}_{N},\mathbf{U}_{N},\mathbf{D}_{N}$, the deviation of  $\mathbf{X}_{N},\mathbf{Y}_{N}$ from being free (semi)-circular systems, and the deviation of $\mathbf{U}_{N}$ from a free system of Haar unitaries (for the case of $\mathbf{U}_{N}$, the matrices are already Haar unitaries individually, but they are not free). Of course for large $N$ these deviations get smaller and thus the distribution of $P_{N}^{\square}$ becomes a better approximation for the expected distribution of $P_{N}$. (By the concentration of measure phenomena, the difference between the empirical distribution of $P_{N}$ and its expectation will
typically also tend to zero in the large $N$ limit.)

Let us denote by $G_N$ the Cauchy transform of $P_N$ and by $G^\square_N$ the Cauchy transform of the free deterministic equivalent $P_N^\square$. Then the usual asymptotic freeness estimates
show that moments of $P_N$ are, for large $N$, close to corresponding moments of $P_N^\square$
(where the estimates involve also the operator norms of the deterministic matrices). This means that for $N\to\infty$ the difference between the Cauchy transforms $G_N$ and $G_N^\square$ goes to zero, even if there does not exist individual limits for both Cauchy transforms.
(In the appendix we will show how to extend estimates for the difference of Cauchy transforms in a neighborhood of infinity -- corresponding to knowledge about the moments of the involved distributions -- to a region close to the real axis; which can then be used to compare the respective distribution functions, or to estimate the Kolmogorov distance between the distributions.)

In the engineering literature the notion of a deterministic equivalent (apparently going back to Girko \cite{Gir}, see also \cite{HLN}) has recently gained quite some
interest. This deterministic equivalent consists in replacing the Cauchy transform $G_N$ of the considered random matrix model (for which no analytic solution exists) by a function $\hat G_N$ which is defined as the solution of a specified system of equations. The specific form of those equations is determined in an ad hoc way, depending on the considered problem, by making approximations for the equations of $G_N$, such that one gets a closed system of equations.  

In all examples of deterministic equivalents we know of it turns out that actually the Cauchy transform of our free deterministic equivalent is the solution to the equations of the deterministic equivalents, i.e., that $\hat G_N=G_N^\square$. We think that our definition of a deterministic equivalent gives a more conceptual approach and shows clearly how this relates with free probability theory. In some sense this indicates that the only meaningful way to get a closed system of equations when dealing with random matrices is to replace the random matrices by free variables. We want to point out that the same phenomena was essentially also observed in 
\cite{Neu-Sp} in the context of the so-called CPA approximation (a kind of mean-field approximation) for the Anderson model from statistical physics. In our present language their result can be rephrased as saying that the free deterministic equivalent of the Anderson model is given by the CPA approximation.

Since in engineering applications rectangular matrices are quite typical, it is important to be able to have the notion of a free deterministic equivalent also for problems involving rectangular matrices. In the next section we will show how the above can be generalized to the rectangular setting, thus hopefully refuting the common perception that rectangular matrices cannot be dealt with in free probability.

\subsection{Rectangular matrices}

In order to include models with rectangular matrices of different sizes, we have to consider slightly more complicated situations which take place in rectangular probability spaces. The main idea is that the space is split into pieces $(\mathcal{A}_{N}^{\left(i,j\right)})_{1\leq i,j\leq k}$ and one can think of a rectangular matrix as an element in one of these pieces; in particular, square matrices will be regarded as elements in the compressed spaces $\mathcal{A}_{N}^{\left(i\right)}$.
 
We can still use all the results on asymptotic freeness for square Gaussian, Haar unitary and deterministic matrices in this framework, by just working in the corresponding compressed space $(\mathcal{A}_{N}^{\left(i\right)},\tau_{N}^{\left(i\right)})$, with the novelty that different sized square random matrices are  $\mathcal{D}$-free from each other and from deterministic rectangular matrices. 

To obtain the free deterministic equivalent, we will replace the random square matrices by free operators ((semi-)circulars and Haar unitaries) in the corresponding compressed spaces, just in the way described in the square case. When it comes to rectangular deterministic matrices in $\mathcal{A}^{\left(i,j\right)}$, we will replace them by operators, keeping the same distribution, but prescribing them to be $\mathcal{D}$-free from the (semi-) circulars and the Haar unitaries.

It is important to impose some restrictions in our non-commutative polynomials for this rectangular situation. We should only treat polynomials where the sums and the products are performed in such a way that the sizes fit. The reason for this is that our rectangular spaces prescribe non-compatible products of matrices to be zero, and hence, such products should not appear in our polynomials. 
\begin{remark}

In order to simplify notation, when the considered non-commutative polynomial $P$ depends on a randomly rotated deterministic matrix  $U^{\left(N\right)}D^{\left(N\right)}U^{\left(N\right)*}$ and not actually on $U^{\left(N\right)},D^{\left(N\right)}\in\mathcal{A}^{\left(i\right)}$ individually, we will just write $d^{\left(N\right)}$ (instead of $ud^{\left(N\right)}u^{*}$) in our free deterministic equivalent, keeping in mind that $d^{\left(N\right)}$ will be free from all other deterministic matrices.

\end{remark}

\section{An Example}

From our discussion on the previous section, the free deterministic equivalent for $\Phi_{N}$ from \eqref{mimomodel} will be 
\begin{equation*}
\Phi:=\Phi_{N}^{\square}=\sum_{i=1}^{k}h_{i}^{\left(N\right)}t_{i}^{\left(N\right)}
h_{i}^{\left(N\right)*}, 
\end{equation*}
where (after supressing the $N$ super-index) each $h_{i}$ and $t_{i}$ is a simple element in a $(p_{0},\dots,p_{k})$-probability space $\left(\mathcal{A},\tau\right)$, with conditional expectation $\mathbf{F}:\mathcal{A}\to\mathcal{D}$, satisfying
\begin{itemize}
 \item[i)] $\tau(p_{i})=\frac{N_{i}}{M}$ for $1\leq i\leq k$
\item[ii)] 
$h_{i}\in\mathcal{A}^{\left(0,i\right)}$, $t_{i}\in\mathcal{A}^{\left(i\right)}$, with 
\begin{equation*}
\tau^{\left(i\right)}\left(t_{i}^{m}\right)=\frac{1}{N_{i}}\mathrm{Tr}\left(T_{i}^{m}\right), 
\end{equation*}
for $1\leq i\leq k$, and for all $m\in\NN$ and all $1\leq i_1,\dots,i_m\leq k$ we have
\begin{eqnarray*}
\tau^{\left(0\right)}\left(h_{i_{1}}h_{i_{1}}^{*}\cdots h_{i_{m}}h_{i_{m}}^{*}\right) & = & \frac{1}{N}\mathrm{Tr}\left(H_{i_{1}}H_{i_{1}}^{*}\cdots H_{i_{m}}H_{i_{m}}^{*}\right)\\
\tau^{\left(i_{1}\right)}\left(h_{i_{1}}^{*}\cdots h_{i_{m}}h_{i_{m}}^{*}h_{i_{1}}\right) & = & \frac{1}{N_{i_{1}}}\mathrm{Tr}\left(H_{i_{1}}^{*}\cdots H_{i_{m}}H_{i_{m}}^{*}H_{i_{1}}\right)
\end{eqnarray*}
\item[iii)] 
$\left\{ h_{1},\dots,h_{k}\right\} ,t_{1},\dots,t_{k}$ are $\mathcal{D}$-free.
\end{itemize}
We consider the subalgebras $$\mathcal{C}:=\left\langle \mathcal{D},h_{1}h_{1}^{*},\dots,h_{k}h_{k}^{*}\right\rangle \subset\mathcal{A}\qquad \text{and}\qquad \mathcal{C}^{\left(0\right)}:=p_{0}\mathcal{C}p_{0}\subset\mathcal{A}^{\left(0\right)},$$ 
with the conditional expectations $$\mathbf{E}:\mathcal{A}\to\mathcal{C}\qquad\text{and}\qquad \mathbf{E}^{\left(0\right)}:\mathcal{A}^{\left(0\right)}\to\mathcal{C}^{\left(0\right)}$$ compatible with $\tau$ and $\tau^{\left(0\right)}$, respectively.

We are ultimately interested in the distribution of $\Phi=\Phi_{N}^{\square}$ considered as an element of $\mathcal{A}^{\left(0\right)}$. It is easy to see that $\mathbf{E}|_{\mathcal{A}^{\left(0\right)}}$ is $\mathcal{C}^{\left(0\right)}$-valued and defines a conditional expectation compatible with $\tau^{\left(0\right)}$.
Hence, by uniqueness, $\mathbf{E}^{\left(0\right)}$ must be $\mathbf{E}|_{\mathcal{A}^{\left(0\right)}}$. Then we observe that for invertible $c\in\cC$
\begin{eqnarray*}
p_{0}G_{\Phi}^{\mathcal{C}}\left(c\right)p_{0} & = & p_{0}\sum_{n\geq0}\mathbf{E}\left(c^{-1}\left(\Phi c^{-1}\right)^{n}\right)p_{0}\\
 & = & \sum_{n\geq0}\mathbf{E}\left(p_{0}c^{-1}p_{0}\left(\Phi p_{0}c^{-1}p_{0}\right)^{n}\right)\\
 & = & \sum_{n\geq0}\mathbf{E}^{(0)}\left(\left(p_{0}cp_{0}\right)^{-1}\left(\Phi\left(p_{0}cp_{0}\right)^{-1}\right)^{n}\right)\\
 & = & G_{\Phi}^{\mathcal{C}_{0}}\left(p_{0}cp_{0}\right),
\end{eqnarray*}
which follows from the commutativity of $\mathcal{C}$ with the elements of $\mathcal{D}$ (note that $p_0\in\cD$). Hence it suffices to describe the $\mathcal{C}$-valued distribution of $\Phi$. 

For this we first notice that $\mathcal{B}:=\left\langle \mathcal{D},h_{1},\dots,h_{k}\right\rangle $ and $\mathcal{N}:=\left\langle \mathcal{D},t_{1},\dots,t_{k}\right\rangle $ are $\mathcal{D}$-free, so we are in position of applying Proposition \ref{thmNiShSp}, and we obtain
\begin{align*}
\kappa_{n}^{\mathcal{B}}(h_{i_{1}}t_{i_{1}}h_{i_{1}}^{*}b_{1},&
h_{i_{2}}t_{i_2}h_{i_{2}}^{*}b_{2},
\dots,h_{i_{n}}t_{i_{n}}h_{i_{n}}^{*}) \\
&=
h_{i_{1}}\kappa_{n}^{\mathcal{B}}\left(t_{i_{1}}h_{i_{1}}^{*}b_{1}h_{i_{2}},
t_{i_{2}}h_{i_{2}}^{*}b_{2}h_{i_{3}},
\dots,t_{i_{n}}\right)h_{i_{n}}^{*}\\
 & = h_{i_{1}}\kappa_{n}^{\mathcal{D}}\left(t_{i_{1}}\mathbf{F}
\left(h_{i_{1}}^{*}b_{1}h_{i_{2}}\right),t_{i_{2}}\mathbf{F}
\left(h_{i_{2}}^{*}b_{2}h_{i_{2}}\right),
\dots,t_{i_{n}}\right)h_{i_{n}}^{*}.
\end{align*}
Since $t_{1},\dots,t_{k}$ are $\mathcal{D}$-free, the last cumulant vanishes unless $i_{1}=\dots=i_{n}$ (and hence $h_{1}t_{1}h_{1}^{*},\dots,h_{k}t_{k}h_{k}^{*}$ are  $\mathcal{B}$-free). Moreover,  in the latter case 
we have that
\begin{eqnarray*}
\kappa_{n}^{\mathcal{B}}\left(h_{i}t_{i}h_{i}^{*}b,\dots,h_{i}t_{i}h_{i}^{*}\right) & = & h_{i}\kappa_{n}^{\mathcal{D}}\left(t_{i}\mathbf{F}\left(h_{i}^{*}bh_{i}\right),\dots,t_{i}\right)h_{i}^{*}
\end{eqnarray*}
is $\mathcal{C}$-valued for all $b_{1},\dots,b_{k-1}\in\mathcal{B}$ (in particular for all $b_{1},\dots,b_{k-1}\in\mathcal{C}$), and thus, by Proposition \ref{thm:Restrictions}, the $\mathcal{C}$-cumulants of $h_{1}t_{1}h_{1}^{*},\dots,h_{k}t_{k}h_{k}^{*}$ are the restrictions of the $\mathcal{B}$-cumulants.
Thus the vanishing of the mixed $\cB$-cumulants implies also the vanishing of the mixed $\cC$-cumulants, hence $h_{1}t_{1}h_{1}^{*},\dots,h_{k}t_{k}h_{k}^{*}$ are also $\cC$-free. Furthermore, we get, for all $c\in\mathcal{C}$, and all $i=1,\dots,k$ that
\begin{eqnarray*}
\kappa_{n}^{\mathcal{C}}\left(h_{i}t_{i}h_{i}^{*}c,\dots,h_{i}t_{i}h_{i}^{*}\right) & = & \kappa_{n}^{\mathcal{B}}\left(h_{i}t_{i}h_{i}^{*}c,\dots,h_{i}t_{i}h_{i}^{*}\right)\\
 & = & h_{i}\kappa_{n}^{\mathcal{D}}\left(t_{i}\mathbf{F}\left(h_{i}^{*}ch_{i}\right),\dots,t_{i}\right)h_{i}^{*}\\
 & = & h_{i}\kappa_{n}^{\mathcal{D}}\left(t_{i}p_{i}\tau^{\left(i\right)}(h_{i}^{*}ch_{i}),\dots,t_{i}\right)h_{i}^{*}\\
 & = & h_{i}h_{i}^{*}\left(\tau^{\left(i\right)}(h_{i}^{*}ch_{i})\right)^{n-1}\kappa_{n}^{\left(i\right)}\left(t_{i},\dots,t_{i}\right).
\end{eqnarray*}
Hence
\begin{eqnarray*}
\mathcal{R}_{h_{i}t_{i}h_{i}^{*}}^{\mathcal{C}}\left(c\right) & = & \sum_{n=1}^{\infty}\kappa_{n}^{\mathcal{C}}\left(h_{i}t_{i}h_{i}^{*}c,\dots,h_{i}t_{i}h_{i}^{*}\right)\\
 & = & \sum_{n=1}^{\infty}h_{i}h_{i}^{*}\left(\tau^{\left(i\right)}\left(h_{i}^{*}ch_{i}\right)\right)^{n-1}\kappa_{n}^{\left(i\right)}\left(t_{i},\dots,t_{i}\right)\\
 & = & h_{i}h_{i}^{*}\mathcal{R}_{t_{i}}^{\left(i\right)}\left(\tau^{\left(i\right)}\left(h_{i}^{*}ch_{i}\right)\right)
\end{eqnarray*}
Since $h_{1}t_{1}h_{1}^{*},\dots,h_{k}t_{k}h_{k}^{*}$ are $\cC$-free we get for the $\cC$-valued $\mathcal{R}$-transform of $\Phi$ that
$$\mathcal{R}_{\Phi}^\cC(c)=\sum_{i=1}^{k}\mathcal{R}_{h_{i}t_{i}h_{i}^{*}}^{\mathcal{C}}(c).$$
Now
we apply all this to our functional relation between the $\mathcal{C}$-valued Cauchy and $\mathcal{R}$-transform. We obtain
\begin{eqnarray*}
G_{\Phi}^{\mathcal{C}}\left(c\right) & = & \left(\mathcal{R}_{\Phi}^{\mathcal{C}}\left(G_{\Phi}^{\mathcal{C}}\left(c\right)\right)-c\right)^{-1}\\
 & = & \left(\sum_{i=1}^{k}\mathcal{R}_{h_{i}t_{i}h_{i}^{*}}^{\mathcal{C}}\left(G_{\Phi}^{\mathcal{C}}\left(c\right)\right)-c\right)^{-1}\\
 & = & \left(\sum_{i=1}^{k}h_{i}h_{i}^{*}\mathcal{R}_{t_{i}}^{\left(i\right)}\left(\tau^{\left(i\right)}\left(h_{i}^{*}G_{\Phi}^{\mathcal{C}}\left(c\right)h_{i}\right)\right)-c\right)^{-1}
\end{eqnarray*}
and therefore 
\begin{eqnarray} \label{eq:fixpointforG} \notag
G_{\Phi}^{\mathcal{C}^{\left(0\right)}}(p_{0}cp_{0}) & = &  p_{0}G_{\Phi}^{\mathcal{C}}\left(c\right)p_{0}\\ \notag
 & = & p_{0}\left(\left(\sum_{i=1}^{k}h_{i}h_{i}^{*}\mathcal{R}_{t_{i}}^{\left(i\right)}\left(\tau^{\left(i\right)}\left(h_{i}^{*}G_{\Phi}^{\mathcal{C}}\left(c\right)h_{i}\right)\right)-c\right)^{-1}\right)p_{0}\\ \notag
 & = & \left(p_{0}\sum_{i=1}^{k}h_{i}h_{i}^{*}\mathcal{R}_{t_{i}}^{\left(i\right)}\left(\tau^{\left(i\right)}\left(h_{i}^{*}p_{0}G_{\Phi}^{\mathcal{C}}\left(c\right)p_{0}h_{i}\right)\right)p_{0}-p_{0}cp_{0}\right)^{-1}\\ 
 & = & \left(\sum_{i=1}^{k}h_{i}h_{i}^{*}\mathcal{R}_{t_{i}}^{\left(i\right)}\left(\frac{N}{N_{i}}\tau^{\left(0\right)}\left(G_{\Phi}^{\mathcal{C}^{\left(0\right)}}
\left(p_{0}cp_{0}\right)h_ih_{i}^{*}\right)\right)-p_{0}cp_{0}\right)^{-1},
\end{eqnarray}
where in the second step we introduced the $p_0$ in the argument of $\tau^{(i)}$ by using $p_0h_i=h_i$; in the third step we expressed $\tau^{(i)}$ in terms of $\tau^{(0)}$ according to their definition.

Consider now $c\in\cC^{(0)}$, so that $p_0cp_0=c$. Then, 
with the definition
$$f_j(c):=\frac{N}{N_{j}}\tau^{(0)}\left(G_{\Phi}^{\mathcal{C}_{0}}
\left(c\right)h_{j}h_{j}^{*}\right)$$
this becomes
$$
G_{\Phi}^{\mathcal{C}^{\left(0\right)}}(c) = \left(\sum_{i=1}^{k}h_{i}h_{i}^{*}\mathcal{R}_{t_{i}}^{\left(i\right)}\left(
f_i(c)
\right)-c\right)^{-1},
$$
and thus we have for the $f_j(c)$ the system of equations
$$f_j(c)=
\frac{N}{N_{j}}\tau^{(0)}\left[
\left(\sum_{i=1}^{k}h_{i}h_{i}^{*}\mathcal{R}_{t_{i}}^{\left(i\right)}\left(
f_i(c)
\right)-c\right)^{-1}
h_{j}h_{j}^{*}\right]
$$

In order to get now the scalar-valued Cauchy transform $G_\Phi(z)$ (for $z\in\CC$) of
our free deterministic equivalent $\Phi=\Phi_N^\square$ we project the
operator-valued Cauchy transform to the scalar level:
$$G_{\Phi}(z)=\tau^{\left(0\right)}\left(G_{\Phi}^{\mathcal{C}^{(0)}}
\left(z\mathbf{1}_{\mathcal{C}^{\left(0\right)}}\right)\right)=\tau^{(0)}\left[
\left(\sum_{i=1}^{k}h_{i}h_{i}^{*}\mathcal{R}_{t_{i}}^{\left(i\right)}\left(
f_i(z)
\right)-z \mathbf{1}_{\mathcal{C}^{\left(0\right)}}\right)^{-1}\right],
$$
where we also put $f_i(z):=f_i(z\mathbf{1}_{\mathcal{C}^{\left(0\right)}})$.

If we rewrite all this in terms of the matrices $H_i$ and $T_i$ we arrive finally at
\begin{eqnarray*}
G_{\Phi}\left(z\right) & = & \frac{1}{N}\mathrm{Tr}\left[\left(\sum_{i=1}^{k}H_{i}H_{i}^{*}\mathcal{R}_{T_{i}}\left(f_{i}\left(z\right)\right)-zI_{N}\right)^{-1}\right]
\end{eqnarray*}
where the $f_j(z)$ are determined by the system of equations ($j=1,\dots,k$)
\begin{equation*}
f_{j}\left(z\right)=\frac{1}{N_{j}}\mathrm{Tr}\left[\left(\sum_{i=1}^{k}H_{i}H_{i}^{*}
\mathcal{R}_{T_{i}}\left(f_{i}\left(z\right)\right)-zI_{N}\right)^{-1}H_{j}H_{j}^{*}\right]
\end{equation*}

These equations are equivalent to the ones showing up in \cite{CouDeHo}.
(Since there they do not use the $\cR$-transform of the matrices $T_i$, this information has to be encoded in another set of equations in their approach.)

One should of course also consider the question whether those equations determine the $f_j(z)$ uniquely, within a suitably chosen class of functions.
This questions is answered affirmatively for the present example in \cite{CouDeHo}. Let us also point out that it is reasonable to expect that such a question (in the form whether the operator-valued equation \eqref{eq:fixpointforG} has a unique fixed point in the class of Cauchy transforms) can be treated in a suitably general frame by extending the approach from
\cite{HRS}.
\begin{remark}

In \cite{CouDeHo}, they consider the seemingly more general problem, with deterministic $n_{i}\times n_{i}$ matrices $T_{i}$ and random matrices $W_{i}=U_{i}P_{i}$ which are rectangular $N_{i}\times n_{i}$, obtained by considering only the first $n_{i}$ columns of a square $N_{i}\times N_{i}$ Haar-distributed unitary matrix.

If we let $\hat{T}_{i}$ be the inclusion of $T$ in the space of $N_{i}\times N_{i}$ by just filling up with zeros, we have
\begin{eqnarray*}
U_{i}\hat{T}_{i}U_{i}^{*} & = & U_{i}P_{i}\hat{T}_{i}P_{i}U_{i}^{*}\\
 & = & W_{i}T_{i}W_{i}^{*}.
\end{eqnarray*}
The convergence in distribution of $T$ w.r.t. $\frac{1}{n_{i}}\mathrm{Tr}$ is equivalent to the convergence in distribution of $\hat{T}_{i}$ w.r.t. $\frac{1}{N_{i}}\mathrm{Tr}$ (the moments just get scaled by a constant). Hence it will be enough to solve the case when $N_{i}=n_{i}$.

\end{remark}

\section{Asymptotic $\mathcal{D}$-freeness of Randomly Rotated Matrices}
In this section we will prove Theorem \ref{thm:main-rectangular}.
For notation, definitions and a detailed exposition of the results used in this section, we refer to the last lecture of \cite{NiSp}, where the square case is treated. We will only point out those aspects of the proof which differ from the square case. First we will compute the joint $\mathcal{D}$-moments of $\mathcal{D}_1,\mathbf{U}\mathcal{D}_2\mathbf{U}^*$ assuming that the desired asymptotic $\mathcal{D}$-freeness holds. Then we compute our asymptotic joint moments by the use of the formulas relating the joint distributions of the entries of Haar unitary matrices and the Weingarten function. We realize that both computations coincide.

\subsection{Computation of Operator-valued moments.}

Suppose that $\left(\mathcal{A},\tau\right)$ is a $(p_1\dots,p_k)$-rectangular probability space and that $c_1,\dots,c_p$, $d_1,\dots,d_q$ are simple elements such that $\mathcal{D}_1:=\langle c_1,\dots,c_q\rangle$, $\mathcal{D}_2:=\langle d_1,\dots,d_p\rangle$ are $\mathcal{D}$-free. The goal of this section is to express the mixed $\mathcal{D}$-moments of $\mathcal{D}_1$, $\mathcal{D}_2$ in terms of the individual $\mathcal{D}$-moments of $\mathcal{D}_1$ and $\mathcal{D}_2$

By linearity we may only consider monomials. The first observation is that a product of simple elements is simple as well. So a general $\mathcal{D}$-moment will be of the form
\begin{equation*}
\mathbf{F}(D_1C_1\dots D_nC_n),
\end{equation*}
where $D_j\in \mathcal{A}^{(r(j),r'(j))}$ is a product of the $d_i$'s and $C_i$ is a product of the $c_i$'s. In order to have a non-zero moment, we must have that $C_j\in \mathcal{A}^{(r'(j),r(j+1))}$. In the following we only consider such products. We denote by $V_m:=\{x:r(x)=m\}\subseteq[n]$.

If we consider a totally ordered set $S=S_1\cup S_2$, where $S_1\cap S_2=\emptyset$ and we let $\{V_1,\dots ,V_s\}=\pi_1$ be a partition of $S_1$ and $\{W_1,\dots ,W_t\}=\pi_2$ be a partition of $S_2$, we will write $\pi_1 \cup \pi_2$ for the partition $\{V_1,\dots,V_s,W_1\dots,W_t\}$ of $S$. For a partition $\pi \in NC(n)$ we consider its \textit{Kreweras
complement}, defined as the largest partition $K(\pi) \in NC(\overline{1},\dots ,\overline{n})$ such that $\pi \cup K(\pi) \in
NC(1,\overline{1},\dots ,n,\overline{n})$. For a pair of non-crossing
partitions $\sigma \leq \pi \in NC(n)$, we have that $\sigma \cup K(\pi)
\in NC(1,\overline{1},\dots ,n,\overline{n})$. We will be interested in
pairs
$\sigma \leq \pi$ such that $$ r'(i)=r(\sigma (i)), r(i)=r'(K(\pi )(i-1)),
\qquad  1\leq i\leq n ,$$ where the arithmetic is performed modulo $n$ and
a partition $\pi$ is viewed here as the permutation in $S_n$ with cycle
decomposition prescribed by the block structure of $\pi$ (ordered
increasingly). The collection of such pairs of partitions will be denoted
by $NC(n)_{r,r'} \subset NC(n)^2$. For such pairs we define the weight $$
\omega (\sigma, \pi)_{r,r'}:=\left(\prod_{\substack {i \in [n] \\
i<\sigma(i)}} \tau \left(
p_{r'(i)}\right)^{-1}\right)\left(\prod_{\substack {j \in [n] \\ j<K(\pi
)(j)}} \tau \left( p_{r(j+1)}\right)^{-1}\right)$$

\begin{proposition}

\label{opvalmo}

Let $\mathbf{F}(D_1C_1\dots D_nC_n)$, $r$ and $r'$ be as above. Then
\begin{eqnarray*}
& & \mathbf{F}(D_1C_1\dots D_nC_n) \\ &=& \frac{p_{r(1)}}{\tau(p_{r(1)})}\sum_{\substack{ \sigma,\pi \in NC(n)_{r,r'}  \\ \sigma \leq \pi }} \omega(\sigma , K(\pi))_{r,r'} \cdot \tau_{K(\pi)}[D_1,\dots,D_n]\cdot \tau_{\sigma}[C_1,\dots,C_n]\cdot\mu _n[\sigma ,\pi]
\end{eqnarray*}
\end{proposition}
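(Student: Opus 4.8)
The plan is to carry out, in the $\mathcal{D}$-valued setting, the combinatorial computation of a moment of an alternating product of free elements (as in the last lecture of \cite{NiSp}), and then to extract the rectangular normalization constants from the explicit formula \eqref{eq:condexp} for $\mathbf{F}$. Observe first that, under the type constraints already imposed on $D_1,C_1,\dots,D_n,C_n$, the product $D_1C_1\cdots D_nC_n$ lies in the corner $\mathcal{A}^{(r(1),r(1))}$, so by \eqref{eq:condexp}
\[
\mathbf{F}(D_1C_1\cdots D_nC_n)=\frac{p_{r(1)}}{\tau(p_{r(1)})}\,\tau\bigl(D_1C_1\cdots D_nC_n\bigr).
\]
This produces the scalar prefactor in the statement and reduces everything to the scalar trace on the right, or equivalently to $\tau$ applied to the $\mathcal{D}$-valued expansion constructed next.

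Next I would apply the operator-valued moment--cumulant formula to $\mathbf{F}(D_1C_1\cdots D_nC_n)$, regarded as a $\mathcal{D}$-moment of the $2n$ simple factors, getting $\sum_{\rho\in NC(2n)}\kappa_\rho^{\mathcal{D}}[D_1,C_1,\dots,D_n,C_n]$. Since $\mathcal{D}_1$ and $\mathcal{D}_2$ are $\mathcal{D}$-free, mixed $\mathcal{D}$-cumulants vanish \cite{Sp-Memoir}, so only those $\rho$ contribute whose blocks never mix a $D$-slot with a $C$-slot; each such $\rho$ decomposes uniquely as $\rho=\pi_D\cup\pi_C$ with $\pi_D\in NC(n)$ on the $D$-slots, $\pi_C\in NC(n)$ on the $C$-slots, and the non-crossing condition on the $2n$ points is exactly $\pi_C\le K(\pi_D)$. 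Fixing $\pi_D$ and summing over the admissible $\pi_C$ collapses that sum, by multiplicativity of the cumulant family, into the $\mathcal{D}$-valued moments $\mathbf{F}_{K(\pi_D)}$ of the $C_i$; a M\"obius inversion in $NC(n)$ then rewrites the surviving cumulants $\kappa^{\mathcal{D}}_{\pi_D}$ of the $D_i$ as moments $\mathbf{F}_{\bullet}$, creating the factor $\mu_n$. A point worth emphasizing here is that, although in general the operator-valued ``alternating'' formula carries a heavily interleaved $\mathcal{D}$-coefficient structure, here $\mathcal{D}$ is commutative and generated by orthogonal projections and every factor is simple, so each $\mathbf{F}$ of a simple element is a scalar times a projection; the interleaving therefore disentangles into ordinary scalar products of traces (times projections which merely record source and target types). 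Finally, relabelling the two partitions by the Kreweras complement -- which is a lattice anti-automorphism, so that $\mu_n[\sigma,\pi]=\mu_n[K(\pi),K(\sigma)]$ -- brings the index set into the stated shape: a sum over $\sigma\le\pi$ with $\tau_\sigma$ applied to the $C_i$ and $\tau_{K(\pi)}$ to the $D_i$.

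It remains to account for the projections and the normalization constants, and this is the step where care is really needed. Every application of \eqref{eq:condexp} to reduce an $\mathbf{F}$ of a (square) simple element to a scalar multiple of a projection both forces the source and target types of the element assembled along the relevant block to coincide and produces one factor $\tau(p_\bullet)^{-1}$. Tracking these type constraints around the alternating cycle $D_1C_1\cdots D_nC_n$ shows that the $(\sigma,\pi)$-term vanishes unless the compatibilities $r'(i)=r(\sigma(i))$ and $r(i)=r'(K(\pi)(i-1))$ hold, i.e. unless $(\sigma,\pi)\in NC(n)_{r,r'}$; and collecting the normalization scalars over the blocks of $\sigma$ and of $K(\pi)$ reconstitutes precisely the weight $\omega(\sigma,K(\pi))_{r,r'}$, the unique ``uncompensated'' block being the one already pulled out as $p_{r(1)}/\tau(p_{r(1)})$. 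I expect this last bookkeeping -- matching each $\tau(p_\bullet)^{-1}$ to the block the definition of $\omega$ assigns it to, checking that the propagated type conditions are exactly membership in $NC(n)_{r,r'}$, and verifying that the operator-valued interleaving has been resolved consistently -- to be the main obstacle; the underlying non-crossing-partition argument is otherwise entirely standard.
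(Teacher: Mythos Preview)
Your proposal is correct and follows essentially the same route as the paper: apply the $\mathcal{D}$-valued moment--cumulant formula on the $2n$ alternating slots, use $\mathcal{D}$-freeness of $\mathcal{D}_1,\mathcal{D}_2$ to restrict to partitions of the form $\pi\cup\overline{\pi}$ with $\overline{\pi}\le K(\pi)$, sum the inner side to moments, M\"obius-invert the remaining cumulants, and then read off the type constraints and normalization weights from \eqref{eq:condexp}. The paper's own proof is terser and does not spell out the Kreweras relabeling you invoke to make the index set literally match the statement, nor the observation that commutativity of $\mathcal{D}$ disentangles the nested $\mathbf{F}$'s into scalar products of traces; these are useful clarifications on your part, not a different argument.
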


\begin{proof}
We use the operator-valued moment-cumulant formulas
\begin{equation*}
\mathbf{F}(D_1C_1\dots D_nC_n)=\sum_{\rho \in NC(1,\overline{1},\dots,n,\overline{n})} \kappa_{\rho}^{\mathcal{D}}[D_1,C_1,\dots,D_n,C_n].
\end{equation*}
By freeness of $\mathcal{D}_1$, $\mathcal{D}_2$, we see that for a non-vanishing contribution to the sum, $\rho $ must be of the form $\pi \cup \overline{\pi}$, where $\pi \in NC(n)$ and $\overline{\pi}\leq K(\pi)$. By fixing one $\pi \in NC(n)$ we may sum over all partitions $\overline{\pi}\leq K(\pi)$, to obtain an expression where the sum runs over $\pi \in NC(n)$ and the sumand consists on evaluating in a nested way (indicated by $\pi \cup K(\pi)$) the operator-valued cumulants of the $D$'s and the operator-valued moments of the $C$'s. Then, for each fixed $\pi$ we use the cumulant-moment formula to express the cumulants in terms of moments, obtaining
\begin{equation*}
\mathbf{F}(D_1C_1\dots D_nC_n)=\sum_{\sigma \leq \pi \in NC(n)} \mathbf{F}_{\sigma \cup K(\pi)}[D_1,C_1,\dots,D_n,C_n]\cdot\mu_{NC}[\sigma, \pi].
\end{equation*}
Finally it is easy to observe that we only have a non-vanishing contribution when $(\sigma ,\pi) \in NC(n)_{r,r'}$. In these cases we can use formula (\ref{eq:condexp}) to get the last expression.
\end{proof}

\subsection{Asymptotic Joint Distribution of Haar Unitary Matrices of Different Sizes.}

We conclude the proof of our theorem by showing that the asymptotic $\mathcal{D}$-moments of $\mathcal{D}_1^{(N)},\mathbf{U}_N\mathcal{D}_2^{(N)}\mathbf{U}^*_N$ are computed as in Proposition \ref{opvalmo}.

First we fix (and omit) $N$ for simplicity of notation. One can see that a general non-vanishing operator-valued moment is of the form
$$\mathbf{F}(U_{r(1)}D^{(1)}U^*_{r'(1)}C^{(1)}\dots U_{r(n)}D^{(n)}U^*_{r'(n)}C^{(n)}),$$ where $D^{(j)}\in\mathcal{A}^{(r(j),r'(j))}$ and $C^{(j)}\in\mathcal{A}^{(r'(j),r(j+1))}$ are products of simple elements in $\mathcal{D}_1^{(N)}$ and $\mathcal{D}_2^{(N)}$, respectively. Then we have
\begin{eqnarray*}
& & \mathbf{F}(U_{r(1)}D^{(1)}U^*_{r'(1)}C^{(1)}\dots U_{r(n)}D^{(n)}U^*_{r'(n)}C^{(n)})\\
&=& \frac{P_{r(1)}}{\tau (P_{r(1)})}\frac{1}{N}\mathrm{Tr}\otimes \mathbb{E} \left(U_{r(1)}D^{(1)}U^*_{r'(1)}C^{(1)}\dots U_{r(n)}D^{(n)}U^*_{r'(n)}C^{(n)} \right) \\ 
&=& \frac{P_{r(1)}}{\tau (P_{r(1)})}\frac{1}{N} \sum_{\substack {s\in [n] \\ i_s,j_s\in [N_{r(s)}]  \\ i'_s,j'_s\in [N_{r'(s)}] }} \mathbb{E}\left(u^{(r(1))}_{i_1j_1}d^{(1)}_{j_1j'_1}\overline{u}^{(r'(1))}_{i'_1j'_1}c^{(1)}_{i'_1i_2}\dots u^{(r(n))}_{i_nj_n}d^{(n)}_{j_nj'_n}\overline{u}^{(r'(n))}_{i'_nj'_n}c^{(n)}_{i'_ni_1}\right)
\end{eqnarray*}
The only difference from \cite{NiSp} is that the expectation of the product of the entries of our unitaries will be factored by the assumption of independence between the Haar Unitaries. We can apply Lemma 2.2 from \cite{Xu} to each of these factors. In the language of the permutations considered in \cite{NiSp} this factorization of the expectation will be translated to the conditions on $\alpha, \beta \in S_n$, that  $r_i=r'_{\alpha(i)}$, $r_i=r'_{\beta(i)}$, $i=1,\dots ,n$. We denote by $S_{n,r,r'}\subset S_n$ the subset of such permutations. If we write $\alpha_m:=\alpha\mid_{V_m}$, $\beta_m:=\beta\mid_{V_m}$, we notice that $\alpha_m^{-1}\beta_m\in S_{V_m}$. Then one can show that the sum of the right hand side of the last equation is actually
\begin{eqnarray*}
& &\sum_{\substack {s\in [n] \\ i_s,j_s\in [N_{r(s)}]  \\ i'_s,j'_s\in [N_{r'(s)}] }} d^{(1)}_{j_1j'_1}\dots d^{(n)}_{j_nj'_n}c^{(1)}_{i'_1i_2}\dots c^{(n)}_{i'_ni_1} \prod_{m=1}^k\mathbb{E}\left(\prod_{\substack {s\in [n] \\ r(s)=m}}u^{(m)}_{i_sj_s} \prod_{\substack {s'\in [n] \\ r'(s')=m}}\overline{u}^{(m)}_{i'_{s'}j'_{s'}}\right) \\
&=& \sum_{\alpha, \beta \in S_{n,r,r'}}\mathrm{Tr}_{\alpha}[D^{(1)},\dots ,D^{(n)}]\cdot\mathrm{Tr}_{\beta^{-1}\gamma}[C^{(1)},\dots ,C^{(n)}]\cdot\prod_{m=1}^k\mathrm{Wg}(N_m,\alpha_m^{-1}\beta_m) \\
&=& \sum_{\alpha, \beta \in S_{n,r,r'}}N^{\#(\alpha)+\#(\beta^{-1}\gamma)}\cdot\tau_{\alpha}[D^{(1)},\dots ,D^{(n)}]\cdot\tau_{\beta^{-1}\gamma}[C^{(1)},\dots ,C^{(n)}]\cdot\prod_{m=1}^k\mathrm{Wg}(N_m,\alpha_m^{-1}\beta_m),
\end{eqnarray*}
where $\gamma\in S_n$ is the long cycle $\gamma=(1,2,\dots,n-1,n)$,  $\#(\alpha)$ denotes the number of cycles of $\alpha$ and each factor $\mathrm{Wg}$ is the Weingarten function on $S_{V_m}\cong S_{|V_m|}$. Since $\sum \#(\alpha_m^{-1} \beta_m)=\#(\alpha^{-1} \beta)$, all further remarks on the order of the contribution remain the same as in the square case. 

For a non-vanishing limit, we must have again $\alpha, \beta \in
S_{NC}(n)$, $\alpha \leq \beta$. Such pairs of permutations are known to
be in one-to-one correspondence with pairs $P_\alpha \leq P_\beta$, of non-crossing partitions. One can see that the image of $S_{n,r,r'}$
under $P$ is exactly $NC(n)_{r,r'}$. Hence the sum here can be thought as
running over the same objects as in Proposition \ref{opvalmo}. For these
pairs of permutations, we get the limiting contribution of
\begin{multline*}
N^{\#(\alpha)+\#(\beta ^{-1}\gamma)-1}\prod_{m=1}^k \phi _{V_m}(\alpha_m ^{-1}\beta_m)N_m^{\#(\alpha_m^{-1}\beta_m)-2|V_m|}=\\
 N^0 \prod_{m=1}^k \phi _{V_m}(\alpha_m ^{-1}\beta_m) \prod_{m=1}^k (\tau(P_m))^{\#(\alpha_m^{-1}\beta_m)-2|V_m|},
\end{multline*}
Where $\phi$ is the coefficient of the leading term in the expansion of $\mathrm{Wg}$. It is not hard to see that the second product coincides with the weight function $\omega(P_\alpha ,P_\beta)_{r,r'}$ of the corresponding partitions.

Finally, we observe that if $\alpha, \beta \in S_{NC(n)}$, $\alpha \leq \beta$, then $\alpha^{-1} \beta \in S_{NC(n)}$. One shows that 
\begin{eqnarray*}
\prod_{m=1}^k \phi _{V_m}(\alpha_m ^{-1}\beta_m)&=&\prod_{m=1}^k \mu_{NC(V_m)}[\mathbf{0}_{V_m},P_{\alpha^{-1}_m \beta_m}]\\
&=& \mu_n \left(\prod_{m=1}^k[\mathbf{0}_{V_m},P_{\alpha^{-1}_m \beta_m}]\right)\\
&=& \mu_n[\mathbf{0}_n,P_{\alpha^{-1} \beta}]\\
&=& \mu_n[P_{\alpha},P_{\beta}]
\end{eqnarray*}
Hence, all the factors appearing in our non-vanishing terms correspond with the ones in Proposition \ref{opvalmo}. We conclude that $\mathcal{D}_1^{(N)},\mathbf{U}_N\mathcal{D}_2^{(N)}\mathbf{U}^*_N$ are asymptotically $\mathcal{D}$-free.

\appendix
%\appendixpage
%\addappheadtotoc

\section{Extension of Estimates for the Difference of Cauchy Transforms}

\subsection{Introduction}

Roughly speaking, we want to discuss how estimates for the difference of Cauchy transforms near infinity (i.e. on a subset $\Delta^+_R := \{z\in\C^+;\ |z| > R\}$ of the upper half-plane $\C^+ := \{z\in\C;\ \Im(z)>0\}$ with $R>0$) extend to estimates near the real line. The approach presented here is motivated by a paper of V. Kargin (\cite{Kargin}). In order to understand how powerful the method is, we will reformulate his results in a more general framework. The main goal is the following theorem:

\begin{theorem}\label{Cauchy}
Let $\beta,c\in(0,1)$ and $R,T>0$ be given. Then there exist constants $m_0>0$ and $\eta_0>0$ such that for any two probability measures $\mu$ and $\nu$ on the real line $\R$ with
$$\sup_{z\in\Delta^+_R} |G_\mu(z) - G_\nu(z)| \leq e^{-m}$$
for some $m>m_0$ the inequality
$$\max_{z\in I(m)} |G_\mu(z) - G_\nu(z)| \leq \exp\big(-cm^{1-\beta}\big)$$
holds, where
$$I(m) := i\eta_0\frac{1-\exp\big(-\frac{1}{m^\beta}\big)}{1+\exp\big(-\frac{1}{m^\beta}\big)} + [-T,T].$$
\end{theorem}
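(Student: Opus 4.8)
The plan is to exploit the analyticity of $G_\mu - G_\nu$ on the upper half-plane $\C^+$ together with the universal a priori bound $|G_\mu(z)|, |G_\nu(z)| \leq 1/\Im(z)$, and to propagate the smallness near infinity down toward the real line by a two-constants-type (Hadamard three-circles / harmonic measure) argument on a suitably chosen domain. Write $F := G_\mu - G_\nu$, which is holomorphic on $\C^+$, satisfies $|F(z)| \leq 2/\Im(z)$ everywhere on $\C^+$, and satisfies $|F(z)| \leq e^{-m}$ on $\Delta^+_R$. The key point is that $\log|F|$ is subharmonic on $\C^+$, so on any bounded subdomain $\Omega \subset \C^+$ one has $\log|F(z)| \leq \int_{\partial\Omega} \log|F(\zeta)|\, d\omega_z^\Omega(\zeta)$, where $\omega_z^\Omega$ is harmonic measure. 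Choosing $\Omega$ so that a large portion of $\partial\Omega$ lies inside $\Delta^+_R$ (where we gain $-m$) and the rest lies at a controlled height $\eta \sim 1/\log$-scale above the real axis (where we only lose $\log(2/\eta)$), the harmonic measure of the ``good'' part of the boundary seen from a point in $I(m)$ will be bounded below by a constant; this yields a bound of the form $|F(z)| \leq (2/\eta)^{1 - \theta} e^{-\theta m}$ on $I(m)$ for some $\theta \in (0,1)$ depending on the geometry.

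The quantitative heart of the argument is the calibration of the height $\eta = \eta(m)$ against the exponent one wants on the conclusion. The segment $I(m)$ is centered at height $\eta_0 \tanh\!\big(\tfrac{1}{2m^\beta}\big) \sim \tfrac{\eta_0}{2} m^{-\beta}$, so one should take the relevant boundary height of order $m^{-\beta}$; then $\log(2/\eta) = O(\beta \log m)$, which is negligible compared to $m^{1-\beta}$. First I would fix a concrete domain — e.g. a rectangle (or half-disc) $\{|\Re z| \le T', \ \eta \le \Im z \le R'\}$ with $T' > T$ and $R' > R$ chosen so that its top and part of its sides sit inside $\Delta^+_R$ — and estimate, from below, the harmonic measure (as seen from any $z \in I(m)$) of that part of $\partial\Omega$ contained in $\Delta^+_R$. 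Because $z$ is at height $\sim m^{-\beta}$ while the ``good'' boundary is at height $\sim 1$, this harmonic measure is comparable to $\Im(z) \asymp m^{-\beta}$ by the standard half-plane estimate, so $\theta \asymp m^{-\beta}$. Plugging in: $|F(z)| \le \exp\big((1-\theta)\log(2/\eta) - \theta m\big) \le \exp\big(O(\log m) - c' m^{1-\beta}\big)$, and for $m$ larger than a threshold $m_0$ the log term is absorbed, giving $\le \exp(-c m^{1-\beta})$ after shrinking the constant. The role of $\eta_0$ is exactly to keep $I(m) \subset \C^+$ and to make the geometric constants in the harmonic-measure bound uniform; any sufficiently small $\eta_0$ (depending on $R, T, \beta, c$) will do.

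I would organize the write-up as: (i) recall the Nevanlinna/Cauchy-transform bound $|G_\mu(z)| \le (\Im z)^{-1}$ and subharmonicity of $\log|F|$; (ii) state the half-plane harmonic-measure lemma giving, for a point $z$ at height $y$ near the base of a fixed box, a lower bound $\asymp y$ for the mass on the far boundary, and an upper bound $\le 1$ total; (iii) assemble the two-constants inequality on the box and optimize in the height parameter set to $\eta(m) \asymp m^{-\beta}$; (iv) check that $I(m)$ lies inside the box and inside $\C^+$ for the stated $\eta_0$, and that $m_0$ can be chosen to absorb the logarithmic loss, obtaining the constant $c$ (any value in $(0,1)$, possibly after enlarging $m_0$). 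The main obstacle I anticipate is step (ii): getting a harmonic-measure lower bound that is simultaneously (a) of the correct order $m^{-\beta}$ in the height, (b) uniform over the whole segment $I(m)$ including its endpoints at $\Re z = \pm T$, and (c) with constants depending only on $R, T$ (not on $\mu, \nu$ or $m$). This is where one must be careful about the corners of the rectangle — either by rounding the domain to a half-disc, by using an explicit conformal map to the disc and reading off the Poisson kernel, or by a reflection/comparison argument with the upper half-plane. The rest is bookkeeping with logarithms and constants.
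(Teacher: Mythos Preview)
Your plan is correct and is in spirit the same argument as the paper's: both exploit subharmonicity of $\log|G_\mu-G_\nu|$ together with the universal bound $|G_\mu(z)-G_\nu(z)|\le 2/\Im z$, and both trade off a ``good'' boundary piece inside $\Delta_R^+$ (contributing $-m$) against a ``bad'' piece near the real line (contributing only $O(\log m)$), with the interpolation parameter of order $m^{-\beta}$. Your quantitative bookkeeping---$\theta\asymp m^{-\beta}$, loss $\log(2/\eta)=O(\log m)$, net gain $\sim m^{1-\beta}$---matches exactly the balance in the paper's proof.

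The one difference is in how the two-constants step is implemented. You propose working directly on a rectangle in $\C^+$ and invoking a harmonic-measure lower bound for the far boundary; you correctly flag this as the main technical obstacle (uniformity in $\Re z\in[-T,T]$, independence of constants from $m$, corner issues). The paper sidesteps this entirely by taking the second of your three suggested options: it pulls back to the unit disc via the Cayley-type map $\psi_a(z)=ia\frac{1+z}{1-z}$ and then applies Hadamard's convexity theorem for $s\mapsto \log M(f,e^s)$ on three radii $r_0<r_2<r_1$ with $r_0$ fixed (chosen so that $\psi_a(D(0,r_0))\subset\Delta_R^+$) and $r_1=e^{-1/m^\beta}$, $r_2=e^{-2/m^\beta}$. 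On the disc the ``harmonic measure'' is just the linear interpolation parameter $\lambda=\frac{s_2-s_0}{s_1-s_0}$, so no separate geometric lemma is needed; the growth control $M((G_\mu-G_\nu)\circ\psi_a,r)\le\frac{4}{a}\frac{1}{1-r}$ follows from a direct kernel estimate, and the image of $\{|w|=r_2\}$ under $\psi_a$ is an explicit circle containing $I(m)$ for suitable $\eta_0$. In short: your rectangle route works, but the conformal-map route makes the constants explicit and eliminates precisely the step you anticipated as delicate.
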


This will enable us to prove the following result about the qualitative behavior of the Kolmogorov distance:

\begin{theorem}\label{Kolmogorov}
Let $\mu$ be a probability measure with compact support contained in an interval $[-A,A]$ such that the cumulative distribution function $\F_\mu$ satisfies
\begin{equation}\label{assumption}
|\F_\mu(x+t) - \F_\mu(x)| \leq \rho |t| \qquad\text{for all $x,t\in\R$}
\end{equation}
for some constant $\rho>0$. Then for all $R>0$ and $\beta\in(0,1)$ we can find $\Theta>0$ and $m_0>0$ such that for any probability measure $\nu$ with compact support contained in $[-A,A]$, which satisfies
$$\sup_{z\in\Delta_R^+} |G_\mu(z)-G_\nu(z)| \leq e^{-m}$$
for some $m>m_0$, the Kolmogorov distance $d(\mu,\nu):=\sup_{x\in\R} |\F_\mu(x)-\F_\nu(x)|$ fulfills
$$d(\mu,\nu)\leq \Theta\frac{1}{m^\beta}.$$
\end{theorem}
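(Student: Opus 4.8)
The plan is to deduce Theorem~\ref{Kolmogorov} from Theorem~\ref{Cauchy} together with a Bai-type smoothing inequality, which converts a bound on the difference of Cauchy transforms along a horizontal segment close to the real axis, plus the regularity hypothesis \eqref{assumption} on $\F_\mu$, into a bound on the Kolmogorov distance.

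\textbf{Fixing the parameters.} Given $R$ and $\beta\in(0,1)$, I would first choose constants $A<B_1<A_1$ and $a>0$ with $\frac1\pi\int_{|u|\le a}\frac{du}{1+u^2}>\frac12$, as needed by the smoothing inequality, then set $T:=A_1$ and invoke Theorem~\ref{Cauchy} with this $T$, the given $R$, and (say) $c:=\tfrac12$. This produces constants $m_0'>0$ and $\eta_0>0$ such that whenever $\sup_{z\in\Delta_R^+}|G_\mu(z)-G_\nu(z)|\le e^{-m}$ for some $m>m_0'$, then $\max_{z\in I(m)}|G_\mu(z)-G_\nu(z)|\le\exp(-\tfrac12 m^{1-\beta})$, where $I(m)=[-T,T]+iv(m)$ is a horizontal segment at height $v(m)=\eta_0\frac{1-e^{-1/m^\beta}}{1+e^{-1/m^\beta}}=\eta_0\tanh\!\bigl(\tfrac{1}{2m^\beta}\bigr)\le\frac{\eta_0}{2m^\beta}$.

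\textbf{Smoothing inequality.} I would then use the fact (a Bai-type inequality from the theory of convergence rates of spectral distributions) that there are constants $C_1,C_2>0$, depending only on $A_1,B_1,a$, such that for every $v\in(0,1]$
\[
d(\mu,\nu)\le C_1\int_{-A_1}^{A_1}\bigl|G_\mu(x+iv)-G_\nu(x+iv)\bigr|\,dx\;+\;C_2\,v^{-1}\sup_{x\in\R}\int_{|y|\le 2av}\bigl|\F_\mu(x+y)-\F_\mu(x)\bigr|\,dy,
\]
where I have already dropped the usual tail term: since $B_1>A$ and both $\mu,\nu$ are supported in $[-A,A]$, one has $\F_\mu=\F_\nu$ on $\{|x|\ge A\}$, so any integral over $\{|x|>B_1\}$ of $|\F_\mu-\F_\nu|$ vanishes. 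For the first term I would specialize to $v=v(m)$, so that the integration window $[-A_1,A_1]$ lies inside $I(m)$ (this is why $T=A_1$ was chosen before applying Theorem~\ref{Cauchy}), giving $\int_{-A_1}^{A_1}|G_\mu-G_\nu|\le 2A_1\exp(-\tfrac12 m^{1-\beta})$. For the second term, \eqref{assumption} yields $\int_{|y|\le 2av}|\F_\mu(x+y)-\F_\mu(x)|\,dy\le\rho\int_{|y|\le 2av}|y|\,dy=4\rho a^2v^2$, so the whole second term is at most $4C_2\rho a^2 v(m)\le 2C_2\rho a^2\eta_0\,m^{-\beta}$.

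\textbf{Conclusion.} Combining, for $m>m_0'$ one obtains $d(\mu,\nu)\le 2C_1A_1\,e^{-\frac12 m^{1-\beta}}+2C_2\rho a^2\eta_0\,m^{-\beta}$. Since $1-\beta>0$, the expression $\tfrac12 m^{1-\beta}-\beta\ln m$ tends to $+\infty$, so the exponential term is dominated by $m^{-\beta}$ for large $m$; enlarging $m_0\ge m_0'$ so that $2C_1A_1\,e^{-\frac12 m^{1-\beta}}\le m^{-\beta}$ for all $m>m_0$, the claim follows with $\Theta:=1+2C_2\rho a^2\eta_0$. The genuine analytic difficulty is entirely packed into Theorem~\ref{Cauchy}; the remaining obstacle is bookkeeping: one must state the smoothing inequality precisely (with its $\kappa,\gamma$-type constants and tail/modulus-of-continuity terms), verify that the nice measure $\mu$ really occupies the roles in which \eqref{assumption} is usable, and carry out the parameter choices in the right order so that $T=A_1$ is large enough and $v(m)\le 1$ for the relevant range of $m$.
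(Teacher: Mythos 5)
Your proposal is correct and follows essentially the same route as the paper: Theorem \ref{Cauchy} applied at height $\eta(m)=\eta_0\frac{1-\exp(-1/m^\beta)}{1+\exp(-1/m^\beta)}$, combined with Bai's smoothing inequality and the Lipschitz hypothesis \eqref{assumption} to control the modulus-of-continuity term. The only cosmetic difference is that you quote Bai's inequality with the tail term (which you correctly argue vanishes for measures supported in $[-A,A]$), while the paper cites a corollary of Bai that already has the simplified two-term form.
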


\subsection{Proof of Theorem \ref{Cauchy}}

We use some results of complex analysis on the unit disc $\D:=\{z\in\C;\ |z|<1\}$. For any $f\in\O(\D)$, i.e. for any holomorphic function $f$ on $\D$ with values in $\C$, we define
$$M(f,r) := \max_{\theta\in[0,2\pi]} |f(re^{i\theta})| \qquad \text{for all $r\in[0,1)$}.$$
Moreover, for any given continuous function $\omega: [0,1) \rightarrow (0,\infty)$, we consider
$$\B(\omega) := \big\{f\in\O(\D);\ \forall r\in[0,1):\ M(f,r) \leq \omega(r)\big\}.$$
The key for the proof of the following theorem is a well-known fact due to G. H. Hardy (\cite{Hardy}), which states that the function
$$\widetilde{M}(f,\cdot):\ (-\infty,0) \rightarrow \R,\ s \mapsto \log(M(f,e^s))$$
is convex for any $f\in\O(\D)$ with $f\not\equiv0$. 

\begin{theorem}\label{disc case}
Let $\omega:\ [0,1)\rightarrow (0,\infty)$ be a continuous function satisfying
\begin{equation}\label{omega-condition}
\lim_{s\rightarrow0^+} s^\alpha\log(\omega(e^{-s})) = 0 \qquad\text{for all $\alpha>0$}.
\end{equation}
Then for all $r_0\in(0,1)$, $\beta\in(0,1)$ and $c\in(0,1)$ there exists a constant $m_0>0$ such that any function $f\in\B(\omega)$ fulfilling the condition $M(f,r_0) \leq e^{-m}$ for some $m>m_0$ satisfies
$$M\Big(f,\exp\Big(\frac{2\log(r_0)}{m^\beta}\Big)\Big) \leq \exp\big(- c m^{1-\beta}\big).$$
\end{theorem}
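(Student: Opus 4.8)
The plan is to pass to logarithmic polar coordinates and turn the statement into a Hadamard three-circles type interpolation, using the convexity theorem of Hardy recalled above as the only non-elementary input. If $f\equiv 0$ there is nothing to prove, so assume $f\not\equiv 0$; then $M(f,r)>0$ for every $r\in(0,1)$ (a nonzero holomorphic function cannot vanish on a whole circle), hence $g(s):=\log M(f,e^s)$ is a well-defined real function on $(-\infty,0)$, and by Hardy's theorem it is convex.

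Next I would choose three radii. Writing $\ell:=-\log r_0>0$ and, for a given $m$,
$$s_0:=\log r_0=-\ell,\qquad s_1:=\frac{2\log r_0}{m^\beta}=-\frac{2\ell}{m^\beta},\qquad s_2:=-\frac1m,$$
one checks by elementary inequalities that $s_0<s_1<s_2<0$ for all sufficiently large $m$: since $\beta<1$, the quantity $|s_1|=2\ell m^{-\beta}$ dominates $|s_2|=m^{-1}$, so $s_2$ really does lie strictly between $s_1$ and the origin, which is what makes the interpolation possible. The two pieces of information available about $g$ at the outer points are $g(s_0)\le -m$ (from the hypothesis $M(f,r_0)\le e^{-m}$) and $g(s_2)\le\log\omega(e^{-1/m})$ (because $f\in\B(\omega)$ and $e^{-1/m}\in(0,1)$); and condition \eqref{omega-condition}, applied with $s=1/m$ and, say, $\alpha=1-\beta>0$, gives $\log\omega(e^{-1/m})=o(m^{1-\beta})$ as $m\to\infty$.

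Now I would interpolate at $s_1$. Writing $s_1=\lambda s_0+(1-\lambda)s_2$ with $\lambda=\frac{s_2-s_1}{s_2-s_0}\in(0,1)$, convexity of $g$ yields $g(s_1)\le\lambda g(s_0)+(1-\lambda)g(s_2)$. A one-line estimate shows $\lambda\ge m^{-\beta}$ for all large $m$ (indeed $m^\beta(s_2-s_1)=2\ell-m^{\beta-1}\ge\ell-m^{-1}=s_2-s_0$ once $m$ is large, both sides being positive), so, using $g(s_2)\le\log\omega(e^{-1/m})$ together with $1-\lambda\le 1$,
$$g(s_1)\le -\lambda m+(1-\lambda)g(s_2)\le -m^{1-\beta}+\max\!\big(\log\omega(e^{-1/m}),0\big)\le -m^{1-\beta}+o(m^{1-\beta}).$$
Fixing $m_0$ large enough that this $o(m^{1-\beta})$ term is at most $(1-c)m^{1-\beta}$ for $m>m_0$, and also large enough that all the "for large $m$" requirements of the previous steps hold, we obtain $g(s_1)\le -c\,m^{1-\beta}$ for every $m>m_0$; since $e^{s_1}=\exp\!\big(\tfrac{2\log r_0}{m^\beta}\big)$, exponentiating gives exactly $M\big(f,\exp(\tfrac{2\log r_0}{m^\beta})\big)\le\exp(-c\,m^{1-\beta})$. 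Note that $m_0$ depends only on $\omega$ (through \eqref{omega-condition}), on $r_0$, $\beta$ and $c$, and not on $f$.

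I do not expect a genuine obstacle. The one point requiring care is the placement of the auxiliary radius $e^{s_2}$ in the second step: it must tend to $1$ together with $m$ so that the crude bound $g(s_2)\le\log\omega(e^{s_2})$ still sits below $m^{1-\beta}$ by \eqref{omega-condition}, yet it must stay strictly between the target radius $\exp(2\log r_0/m^\beta)$ and $1$ so that convexity can be applied with $s_1$ in the interior. Any choice $s_2=-m^{-\delta}$ with $\beta<\delta\le 1$ works, and $\delta=1$ is the cleanest; once the three radii are correctly placed, Hardy's theorem does the rest.
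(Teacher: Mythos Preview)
Your proof is correct and follows essentially the same approach as the paper's: both use Hardy's convexity of $s\mapsto\log M(f,e^s)$ to interpolate between the hypothesis at $s_0=\log r_0$ and the $\omega$-bound at an auxiliary point closer to $0$. The only difference is the placement of that auxiliary point---the paper takes it at $s_0/m^\beta$ (so that the target $2s_0/m^\beta$ sits exactly between it and $s_0$) and invokes \eqref{omega-condition} with $\alpha=(1-\beta)/\beta$, whereas you take it at $-1/m$ and use $\alpha=1-\beta$; both choices work and yield the same conclusion.
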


\begin{proof}
We define the function
$$\widetilde{\omega}:\ (-\infty,0) \rightarrow \R,\ s\mapsto \log(\omega(e^s))$$
and put $s_0 := \log(r_0)$. If $m$ is sufficiently large, we have that $s_1 := \frac{s_0}{m^\beta}$ and $s_2 := \frac{2s_0}{m^\beta}$ satisfy the condition $s_0<s_2<s_1$. Then we consider
$$l:\ \R\rightarrow\R,\ s\mapsto \frac{s-s_0}{s_1-s_0}\widetilde{\omega}(s_1) - \Big(1-\frac{s-s_0}{s_1-s_0}\Big)m.$$
Since
$$\frac{s_2-s_0}{s_1-s_0} = \frac{1-\frac{2}{m^\beta}}{1-\frac{1}{m^\beta}} \quad \stackrel{m\rightarrow\infty}{\longrightarrow}\quad 1 \qquad\text{and}\qquad m^\beta\Big(1-\frac{s_2-s_0}{s_1-s_0}\Big) = \frac{1}{1-\frac{1}{m^\beta}}\quad \stackrel{m\rightarrow\infty}{\longrightarrow}\quad 1$$
we deduce, using the equations
$$\frac{l(s_2)}{m^{1-\beta}} = \frac{1}{|s_0|^\alpha}\frac{s_2-s_0}{s_1-s_0} \Big(|s_1|^\alpha\widetilde{\omega}(s_1)\Big) - m^\beta\Big(1-\frac{s_2-s_0}{s_1-s_0}\Big)$$
and \eqref{omega-condition} with $\alpha:=\frac{1-\beta}{\beta}>0$, that $\frac{l(s_2)}{m^{1-\beta}} \rightarrow -1$ as $m\rightarrow\infty$. In particular, this implies that we can find $m_0>0$ such that $l(s_2) \leq - cm^{1-\beta}$ is fulfilled for all $m>m_0$.\\
Let $m>m_0$ be given and let $f\in\B(\omega)$ be a function satisfying $M(f,r_0) \leq e^{-m}$, where we restrict ourself to the non-trivial case $f\not\equiv 0$. Because $\widetilde{M}(f,\cdot)$ is a convex function, it follows from the inequalities
$$l(s_0) = -m \geq \log\big(M(f,e^{s_0})\big) = \widetilde{M}(f,s_0) \qquad\text{and}\qquad l(s_1) = \widetilde{\omega}(s_1) \geq \widetilde{M}(f,s_1)$$
that also the desired inequality $\widetilde{M}(f,s_2) \leq l(s_2) \leq - cm^{1-\beta}$ holds.
\end{proof}

Since we are interested in the behavior of Cauchy transforms, which are holomorphic functions living on the upper half-plane, we have to translate the above theorem. Therefore we define for every $a>0$ a biholomorphic mapping
$$\psi_a:\ \D\rightarrow\C^+,\ z\mapsto ia\frac{1+z}{1-z}.$$

\begin{theorem}\label{half-plane case}
Let $\omega: [0,1)\rightarrow (0,\infty)$ be a continuous function, which satisfies condition \eqref{omega-condition}, and let $a_0>0$ be an arbitrary constant. We consider
$$\B^\ast(\omega,a_0) := \{f\in\O(\C^+);\ \forall a\geq a_0:\ f\circ\psi_a\in\B(\omega)\}.$$
Let $\beta,c\in(0,1)$ and $R,T>0$ be given. Then we can find constants $m_0>0$ and $\eta_0>0$ such that any function $f\in\B^\ast(\omega,a_0)$, which satisfies $\sup_{z\in\Delta^+_R} |f(z)| \leq e^{-m}$ for some $m>m_0$, also fulfills
$$\max_{t\in[-T,T]} \bigg|f\bigg(i\eta_0\frac{1-\exp\big(-\frac{1}{m^\beta}\big)}{1+\exp\big(-\frac{1}{m^\beta}\big)} + t\bigg)\bigg| \leq \exp\big(-cm^{1-\beta}\big).$$
\end{theorem}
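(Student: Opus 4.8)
The plan is to reduce Theorem~\ref{half-plane case} to Theorem~\ref{disc case} by pulling $f$ back to the unit disc through the maps $\psi_a$, and then to turn the resulting bound on a circle inside $\C^+$ into a bound on the horizontal segment of the statement by the maximum modulus principle. Fix once and for all some $r_0\in(0,1)$ (say $r_0=\tfrac12$). A direct computation shows that $\psi_a$ carries the circle $\{|z|=r\}$ onto the Euclidean circle in $\C^+$ with centre $ia\frac{1+r^2}{1-r^2}$ and radius $a\frac{2r}{1-r^2}$, and carries $\{|z|\le r\}$ onto the corresponding closed disc; in particular $\psi_a(\{|z|=r_0\})$ has distance $a\frac{1-r_0}{1+r_0}$ from the origin. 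Hence, putting
\[
a:=\max\Big\{a_0,\ 2R\tfrac{1+r_0}{1-r_0}\Big\},
\]
we get $a\ge a_0$ and $\psi_a(\{|z|=r_0\})\subset\Delta^+_R$. Since $f\in\B^\ast(\omega,a_0)$ and $a\ge a_0$, the function $g:=f\circ\psi_a$ belongs to $\B(\omega)$, and
\[
M(g,r_0)=\max_{\theta\in[0,2\pi]}\big|f(\psi_a(r_0e^{i\theta}))\big|\ \le\ \sup_{z\in\Delta^+_R}|f(z)|\ \le\ e^{-m}.
\]

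Let $m_0^{(1)}$ be the constant provided by Theorem~\ref{disc case} applied with the parameters $r_0,\beta,c$. Then for every $m>m_0^{(1)}$ we obtain $M(g,\rho_m)\le\exp(-cm^{1-\beta})$, where $\rho_m:=\exp\!\big(\tfrac{2\log r_0}{m^\beta}\big)\in(0,1)$. Applying the maximum modulus principle to $g$ on $\{|z|\le\rho_m\}$ gives $|f(w)|\le\exp(-cm^{1-\beta})$ for every $w$ in the closed disc $D_m:=\psi_a(\{|z|\le\rho_m\})$, which by the first paragraph is centred at $ic_m$ with radius $r_m$, where $c_m=a\frac{1+\rho_m^2}{1-\rho_m^2}$ and $r_m=a\frac{2\rho_m}{1-\rho_m^2}$.

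It remains to choose $\eta_0$ so that, for all sufficiently large $m$, the segment $S_m:=iy_m+[-T,T]$ with $y_m:=\eta_0\frac{1-\delta_m}{1+\delta_m}$ and $\delta_m:=\exp(-1/m^\beta)$ is contained in $D_m$; by the preceding paragraph this immediately yields $\max_{t\in[-T,T]}|f(iy_m+t)|\le\exp(-cm^{1-\beta})$, which is the assertion. A point $t+iy_m$ lies in $D_m$ iff $t^2\le r_m^2-(c_m-y_m)^2$; since $r_m-c_m=-a\frac{1-\rho_m}{1+\rho_m}$ and $r_m+c_m=a\frac{1+\rho_m}{1-\rho_m}$, the worst case $|t|=T$ turns this into
\[
T^2\ \le\ \Big(y_m-a\tfrac{1-\rho_m}{1+\rho_m}\Big)\Big(a\tfrac{1+\rho_m}{1-\rho_m}-y_m\Big).
\]
Using the identities $\frac{1-\rho_m}{1+\rho_m}=\tanh\!\big(\tfrac{|\log r_0|}{m^\beta}\big)$, $\frac{1+\rho_m}{1-\rho_m}=\coth\!\big(\tfrac{|\log r_0|}{m^\beta}\big)$ and $\frac{1-\delta_m}{1+\delta_m}=\tanh\!\big(\tfrac{1}{2m^\beta}\big)$, and expanding both factors as $m\to\infty$, the right-hand side converges to $\frac{a}{|\log r_0|}\big(\frac{\eta_0}{2}-a|\log r_0|\big)$. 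Hence it suffices to take $\eta_0$ so large that this limit is $>T^2$, for example $\eta_0:=2|\log r_0|\big(a+\tfrac{T^2}{a}\big)+1$; there is then $m_0\ge m_0^{(1)}$ with $S_m\subset D_m$ for all $m>m_0$, and we are done.

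The only genuine computation is the last estimate, and the subtlety there is that the radius $\rho_m$ coming out of Theorem~\ref{disc case} and the height $\frac{1-\delta_m}{1+\delta_m}$ appearing in the statement both shrink like $m^{-\beta}$, so in the displayed product one factor tends to $0$ while the other tends to $\infty$ at exactly compensating rates; the limit is a finite positive constant that grows linearly in $\eta_0$, and the whole point is to force it above $T^2$ by enlarging $\eta_0$, not $m$. Everything else — the explicit action of $\psi_a$ on circles, the membership $g\in\B(\omega)$, and the maximum modulus step — is routine.
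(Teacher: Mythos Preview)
Your proof is correct and follows essentially the same route as the paper: pull back via $\psi_a$, apply Theorem~\ref{disc case}, and then check that the target segment lies inside the image disc $\psi_a(\{|z|\le\rho_m\})$. The only difference is cosmetic: the paper picks $r_0=e^{-1/2}$ so that $\rho_m=\exp(2\log r_0/m^\beta)$ coincides exactly with $\delta_m=\exp(-1/m^\beta)$, which makes the final containment estimate cleaner (the first factor becomes $(\eta_0-a)\frac{1-r}{1+r}$ directly), whereas you keep $r_0$ generic and compensate with the tanh/coth asymptotics; both lead to the same threshold $\eta_0>2|\log r_0|\,(a+T^2/a)$.
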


\begin{proof}
It is an easy exercise to show that the following equality holds for all $z\in\D$
$$\bigg|\psi_a(z)-ia\frac{1+|z|^2}{1-|z|^2}\bigg| = \frac{2a|z|}{1-|z|^2}$$ 
and to deduce that $\psi_a$ maps the disc $D(0,r)$ with $r>0$ to the disc $D\big(ia\frac{1+r^2}{1-r^2}, \frac{2ar}{1-r^2}\big)$.
If we put $r_0 := \exp(-\frac{1}{2}) \in (0,1)$, it is therefore possible to choose $a\geq a_0$ such that $\psi_a(D(0,r_0)) \subset \Delta_R^+$ is satisfied. Hence we have $M(f\circ\psi_a,r_0)\leq \sup_{z\in\Delta^+_R} |f(z)|$. An application of Theorem \ref{disc case} shows that, in the case where $\sup_{z\in\Delta^+_R} |f(z)| \leq e^{-m}$ holds for some $m>m_0$, we also have
$$M\Big(f\circ\psi_a,\exp\Big(\frac{2\log(r_0)}{m^\beta}\Big)\Big) \leq \exp\big(- c m^{1-\beta}\big).$$
This means that $|f|$ is bounded by $\exp\big(- c m^{1-\beta}\big)$ on the disc $\psi_a(D(0,r))$ with $r:=\exp\big(-\frac{1}{m^\beta}\big)$. For an appropriate choice of $\eta_0>0$ (an easy calculation shows that each $\eta_0>2\frac{a^2+T^2}{a}$ works), the disc $D\big(ia\frac{1+r^2}{1-r^2}, \frac{2ar}{1-r^2}\big)$ contains the whole interval $i\eta_0\frac{1-r}{1+r} + [-T,T]$ for sufficiently large values of $m$. This leads finally to the desired inequality.
\end{proof}

The validity of Theorem \ref{Cauchy} relies on the fact that we can apply this result to functions of the form $G_\mu-G_\nu$ with probability measures $\mu$ and $\nu$ on $\R$.\\
Since $\psi_a: \D\rightarrow\C^+$ gives by extension a homeomorphism $\psi_a|_{\partial\D\backslash\{1\}}: \partial\D\backslash\{1\} \rightarrow \R$, we deduce that the Cauchy transform $G_\mu$ of any probability measures $\mu$ on $\R$ transforms in the following way
$$G_\mu(\psi_a(z)) = \int_\R \frac{1}{\psi_a(z)-t} d\mu(t) = \frac{1}{2ai} \int_{\partial\D\backslash\{1\}} \frac{(1-z)(1-\xi)}{z-\xi} d\widehat{\mu}_a(\xi) \qquad\text{for all $z\in\D$},$$
where the probability measure $\widehat{\mu}_a$ is given as the push forward of $\mu$ by $(\psi_a|_{\partial\D\backslash\{1\}})^{-1}$. Thereby we used the formula
$$\frac{1}{\psi_a(z)-\psi_a(\xi)} = \frac{1}{2ai} \frac{(1-z)(1-\xi)}{z-\xi} \qquad\text{for all $\xi\in\partial\D\backslash\{1\}$ and $z\in\D$}.$$
Since $|1-z| \leq 2$, $|1-\xi| \leq 2$ and $|z-\xi| \geq 1-|z|$ holds for all $\xi\in\partial\D$ and $z\in\D$, we easily see $M(G_\mu\circ\psi_a,r) \leq \frac{2}{a} \cdot\frac{1}{1-r}$ for $0\leq r<1$.\\
If $\mu$ and $\nu$ are two probability measures on $\R$, we deduce with the above result that
$$M\big((G_\mu-G_\nu)\circ\psi_a,r\big) \leq \frac{4}{a} \cdot \frac{1}{1-r} \leq \frac{1}{1-r}$$
holds for all $0\leq r<1$ and $a\geq a_0:=4$. If we define $\omega: [0,1) \rightarrow (0,\infty)$ by $\omega(r):=\frac{1}{1-r}$, it is an easy exercise to show that $\omega$ satisfies condition \eqref{omega-condition}. Thus we get $G_\mu-G_\nu \in \B^\ast(\omega,a_0)$ and Theorem \ref{Cauchy} follows immediately from Theorem \ref{half-plane case}.

\subsection{Proof of Theorem \ref{Kolmogorov}}

At first, we recall a result of Z. D. Bai (Corollary 2.3 in \cite{Bai}), which particularly states that there are constants $c_1,c_2>0$ such that
$$d(\mu,\nu) \leq c_1 \bigg(\int^{c_2A}_{-c_2A} |G_\mu(i\eta + t) - G_\nu(i\eta + t)| dt + \frac{1}{\eta}\sup_{x\in\R} \int^{4\eta}_{-4\eta} |\F_\mu(x+t)-\F_\mu(x)| dt\bigg)$$
holds for all probability measures $\mu$ and $\nu$ with supports belonging to the interval $[-A,A]$ and for all $\eta>0$. If $\mu$ satisfies \eqref{assumption}, we apply Theorem \ref{Cauchy} in the case $T=c_2A$ and for any probability measure $\nu$ with compact support contained in $[-A,A]$, which fulfills $\sup_{z\in\Delta^+_R}|G_\nu(z)-G_\mu(z)| \leq e^{-m}$ for some $m>m_0$, we choose
$$\eta(m) = \eta_0\frac{1-\exp\big(-\frac{1}{m^\beta}\big)}{1+\exp\big(-\frac{1}{m^\beta}\big)}.$$
Hence we get
$$\int^{c_2A}_{-c_2A} |G_\mu(i\eta(m) + t) - G_\nu(i\eta(m) + t)| dt \leq 2c_2A \exp\big(-cm^{1-\beta}\big)$$
and by assumption \eqref{assumption}
$$\sup_{x\in\R} \int^{4\eta(m)}_{-4\eta(m)} |\F_\mu(x+t)-\F_\mu(x)| dt \leq \rho \int^{4\eta(m)}_{-4\eta(m)} |t| dt = 16\rho\eta(m)^2.$$
Since $\lim_{m\rightarrow\infty} m^\beta\eta(m) = \frac{1}{2}\eta_0$ and $\lim_{m\rightarrow\infty} m^\beta \exp(-cm^{1-\beta}) = 0$, it follows that
$$d(\mu,\nu) \leq \Big(2c_1c_2A m^\beta\exp\big(-cm^{1-\beta}\big) + 16c_1\rho m^\beta\eta(m)\Big) \frac{1}{m^\beta} \leq \Theta\frac{1}{m^\beta}$$
holds with an appropriate choice of $\Theta>0$ (for instance, each $\Theta > 8c_1\rho\eta_0$ works) for all sufficiently large values of $m$.


\begin{thebibliography}{999}

\bibitem{AnGuZe} G. Anderson, A. Guionnet, O. Zeituni: \emph{An Introduction to Random Matrices}. Cambridge University Press, Cambridge, 2010


\bibitem{BePoVi}
S. Belinschi, M. Popa, V. Vinnikov:
Infinite divisibility and a non-commutative Boolean-to-free Bercovici-Pata bijection. Preprint, 2010, arXiv:1007.0058


\bibitem{Be1} F. Benaych-Georges: Rectangular random matrices, related convolution. 
\emph{Prob. Theory Related Fields.} Vol. \textbf{144} (2009), 471--515.

\bibitem{Be2} F. Benaych-Georges: Rectangular random matrices, related free entropy and free Fisher's information. \emph{J.  Operator Th.} Vol. \textbf{62}, no. 2 (2009) 371--419.

\bibitem{Co} B. Collins: Moments and cumulants of polynomial random variables of unitary groups, the Itkinson-Zuber integral and free probability. \emph{Internat. Math. Res. Notices} (2003), 953--982.


\bibitem{CouDeHo} R. Couillet, J. Hoydis, M. Debbah: A Deterministic Equivalent Approach to the Performance Analysis of Isometric Random Precoded Systems. \emph{IEEE Trans. Inf. Theory} (submitted), 2010.


\bibitem{CouDeSi} R. Couillet, M. Debbah, J. Silverstein: A Deterministic Equivalent for the Analysis of Correlated MIMO Multiple Access Channels. \emph{IEEE Trans. Inf. Theory}, Vol. \textbf{57} (2011), 3493--3514.

\bibitem{Gir}
V.L. Girko: \emph{Theory of Stochastic Canonical Equations}. Kluwer, Dordrecht, 2001.

\bibitem{HLN}
W. Hachem, P. Loubaton, J. Najim: Deterministic equivalents for certain functionals of large random matrices. \emph{Ann. Appl. Prob.} \textbf{17} (2007), 875--930.


\bibitem{HRS}
W. Helton, R. Rashidi Far, R. Speicher: Operator-valued semicircular elements: Solving a quadratic matrix equation with positivity constraints. \emph{Internat. Math. Res. Notices} (2007).

\bibitem{HP} F. Hiai and D. Petz: \emph{The semicircle law, free
random variables and entropy}. AMS, Providence, RI, 2000.

\bibitem{Mingo-Speicher}
J. Mingo, R. Speicher: Sharp bounds for sums associated to graphs of matrices. Preprint 2009, arXiv:0909.4277.

\bibitem{Neu-Sp}
P. Neu, R. Speicher: Rigorous mean-field theory for Coherent-Potential Approximation: Anderson model with free random variables.
\emph{J. Stat. Phys.} \textbf{80} (1995), 1279-1308 

\bibitem{NiShSp} A. Nica, D. Shlyakhtenko, R. Speicher: Operator-valued distributions I, Characterizations of freeness. \emph{Internat. Math. Res. Notices} Vol. \textbf{29}, (2002), 1509--1538.


\bibitem{NiSp} A. Nica, R. Speicher: \emph{Lectures on the Combinatorics of Free Probability,} London Mathematical Society Lecture Note Series, vol \textbf{305}, Cambridge University Press, 2006.

\bibitem{PCH}
M. J. M. Peacock, I. B. Collings, M. L. Honig: Asymptotic Spectral Efficiency of Multiuser Multisignature CDMA in Frequency-selective Channels. \emph{IEEE Trans. Inf. Theory},
Vol. \textbf{52} (2006), 1113--1129.


\bibitem{ROBS}
R. Rashidi Far, T. Oraby, W. Bryc, R. Speicher:
On slow-fading MIMO systems with nonseparable correlation.
\emph{IEEE Trans. Inf. Theory}, Vol. \textbf{54} (2008), 544--553.

\bibitem{S-band}
D. Shlyakhtenko: Random Gaussian band matrices and freeness with amalgamation. \emph{Internat. Math. Res. Notices} no. \textbf{20} (1996), 1013--1025.

\bibitem{Sp-Memoir}
R. Speicher: Combinatorial theory of the free product with amalgamation
and operator-valued free probability theory. \emph{Memoirs of the American Math. Society} 132 (1998), x+88.

\bibitem{Tulino-Verdu}
A. Tulino, S. Verdu: \emph{Random Matrix Theory and Wireless Communications}. Amsterdam, The Netherlands: Now Publishers, 2004.

\bibitem{V-operatorvalued}
D. Voiculescu: 
Operations on certain non-commutative operator-valued random variables. Recent advances in operator algebras (Orleans, 1992). \emph{Asterisque} No. \textbf{232} (1995).

\bibitem{V-Inventiones}
D. Voiculescu: Limit laws for random matrices and free products, \emph{Invent. Math.} \textbf{104} (1991), 201--220.

\bibitem{VDN}
D. Voiculescu, K. Dykema, A. Nica: \emph{Free Random Variables}.
CRM Monograph Series, Vol. \textbf{1}, American Math. Society, 1992.

\bibitem{Xu} F. Xu, A random matrix model from two-dimensional Yang-Mills theory, \emph{Com. Math. Phys.} Vol. \textbf{190}, (1997), 287--307.

\end{thebibliography}

\begin{thebibliography}{999}
\bibitem[A1]{Bai} \textsc{Z.D. Bai}: Convergence rate of expected spectral distributions of large-dimensional random matrices: Part I. Wigner matrices. \textit{Ann. Probab.} \textbf{21} (1993), 625-648.
\bibitem[A2]{Hardy} \textsc{G.H. Hardy}: The mean value of the modulus of an analytic function. \textit{Proc. London Math. Soc.} 
\textbf{14} (1915), 269-277.
\bibitem[A3]{Kargin} \textsc{V. Kargin}: A concentration inequality and a local law for the sum of two random matrices. Preprint 2010, arXiv:1010.0353.
\end{thebibliography}
\end{document}